\theoremstyle{plain}
\newtheorem{theorem}{Theorem}
\newtheorem{corollary}{Corollary}
\theoremstyle{definition}
\newtheorem{definition}{Definition}
\newif\ifNotUse  
 \newif\ifNotUse  
\begin{document}
\title{\textit{VATE}: a trade-off between memory and preserving time for high accuracy cardinalities estimation under sliding time window} 

\author[seu_cs]{Jie Xu\corref{cor1}}
\ead{xujieip@163.com}
\author [seu_ns] {Wei Ding}
\ead{wding@carnation.njnet.edu.cn}
\author [seu_ns] {Xiaoyan Hu}
\ead{xyhu@carnation.njnet.edu.cn}

\cortext[cor1]{Corresponding author}
\address[seu_cs]{School of Computer Science and Engineering, South East University, Nanjing, China}
\address[seu_ns]{School of Cyber Science and Engineering, South East University, Nanjing, China}

\begin{abstract}
Host cardinality is one of the important attributes in the field of network research. The cardinality estimation under sliding time window has become a research hotspot in recent years because of its high accuracy and small delay. This kind of algorithms preserve the time information of sliding time window by introducing more powerful counters. The more counters used in these algorithms, the higher the estimation accuracy of these algorithms. However, the available number of sliding counters is limited due to their large memory footprint or long state-maintenance time. To solve this problem, a new sliding counter, asynchronous timestamp (\textit{AT}), is designed in this paper which has the advantages of less memory consumption and low state-maintenance time. \textit{AT} can replace counters in existing algorithms. On the same device, more \textit{AT} can be used to achieve higher accuracy. Based on \textit{AT}, this paper designs a new multi-hosts cardinalities estimation algorithm \textit{VATE}. \textit{VATE} is also a parallel algorithm that can be deployed on GPU. With the parallel processing capability of GPU, \textit{VATE} can estimate cardinalities of hosts in a 40 Gb/s high-speed network in real time at the time granularity of 1 second.
\end{abstract} 
 \maketitle
 
\begin{keyword}
cardinality estimation \sep sliding time window \sep GPGPU \sep network measurement
\end{keyword}
\section{Introduction}
Measuring the attributes of the core network, such as traffic size, packet number, host cardinality and so on, is the basis of network management and research. This paper mainly studies how to estimate the cardinality of host under sliding time window. Cardinality refers to the number of distinct elements in a data stream over a period of time. It is an important attribute in network management and research, and plays an important role in many network applications, such as DDoS attack detection\cite{DosC:ACooperativeIntrusionDetectionFrameworkCloud}\cite{DosC:AnalysisSimulationDDOSAttackCloud} and network scanning\cite{Scan:EvasionResistantNetworkScanDetection}. In the network domain, cardinality can be flow cardinality (the number of distinct flows in a period\cite{HSD:sampleFlowDistributionEstimate}), host cardinality (the number of other hosts communicating with it\cite{HSD:streamingAlgorithmFastDetectionSuperspreaders}\cite{HSD:AcontinuousVirtualVectorBasedAlgorithmMeasuringCardinalityDistribution}). The cardinality estimation algorithm can be applied to all these problems. In this paper, the estimation of host cardinality is taken as the research object, and the algorithm studied in this paper can also be applied to the estimation of flow cardinality.

Suppose there is a core network \textit{ANet}, which is managed by some organisation, institute or ISP (Internet Service Provider). \textit{ANet} communicates with other networks through a group of edge routers \textit{ER}. Use \textit{BNet} to represent other networks that communicate with \textit{ANet}. For a host $aip$ in \textit{ANet}, its cardinality refers to the number of hosts in \textit{BNet} that communicate with it through \textit{ER} in a time window. We call these hosts that communicate with $aip$ in \textit{BNet} as the opposite hosts of $aip$. They communicate with $aip$ in a certain time window, that is, they send packets to $aip$ or receive packets from $aip$. In this paper, the task of host cardinality estimation is to estimate the cardinality of each $aip$ in \textit{ANet} by scanning all packets passing through the \textit{ER}.

The time window can be a discrete time window or a sliding time window, as shown in Figure \ref{fig_sliding_and_discrete_time_window}. Divide the network traffic by successive time slices. The size of the time slice can be set to 1 second, 1 minute, 5 minutes or other lengths suitable for different applications.

\begin{figure}[!ht]
\centering
\includegraphics[width=0.47\textwidth]{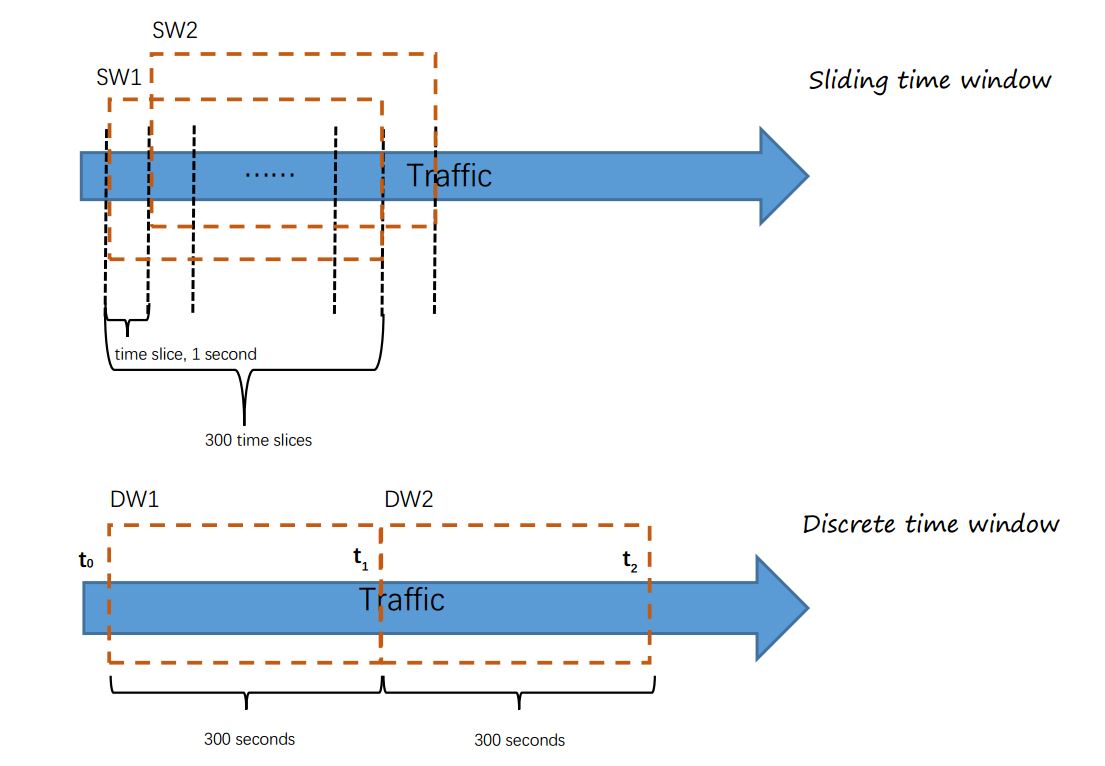}
\caption{Sliding time window and discrete time window}
\label{fig_sliding_and_discrete_time_window}
\end{figure}

The time window moves forward one time slice at a time. A sliding time window can contain up to $k$ consecutive time slices\cite{SDC:IMC2003:IdentifyingFrequentItemsSlidingWindowsOnlinePacketStreams} while a discrete time window has only one time slice. Let $W(t, k')$ denote a time window consisting of consecutive $k'$ time slices starting from time slice $t$, where $k'$ is a positive integer less than or equal to $k$. The size of the time window is the length of time it covers. For a sliding time window containing $k'$ time slices, its size is the sum of the lengths of $k'$ time slices. For discrete time windows, the size is the length of a time slice. It is assumed that the cardinality estimation is performed only at the end of each time slice, that is, at the end of the time slice $t+k'-1$, the cardinalities of the hosts in $W(t, k')$ is estimated. 

Finer-grained time slice makes the cardinality obtained under sliding time window more accurate and timely than that obtained under discrete time window.
But the calculation of cardinality under sliding time window is much more complicated than that under discrete time window, because when sliding time window moves forward, it needs to keep the state of opposite hosts in the previous time slices. The state of opposite host is defined as follows:
\begin{definition}[Active opposite hosts]
\label{def-activeOppositeHost}
Let $OP(aip, t, k')$ represents $aip$'s set of opposite hosts in the time window $W(t, k')$. If $bip$ belongs to $OP(aip, t, k')$, it is said that $bip$ is active for $aip$ in time window $W(t, k')$.
\end{definition}

The cardinality estimation under sliding time window must preserve the active opposite hosts in the previous $k-1$ time slices.
In the cardinal estimation algorithm, $aip$ may contain multiple counters and each counter may correspond to various opposite hosts. The state of the counter is determined by its opposite hosts.
\begin{definition}[Active counter]
\label{def-actveCounter}
In the time window $W(t, k')$, when one of the opposite hosts corresponding to a counter is active, the counter is active; if all the opposite hosts corresponding to it are inactive, the counter is inactive.
\end{definition}

Timestamp (\textit{TS})\cite{SDC:MaintainingStreamStatisticsOverSlidingWindows} is the earliest counter for sliding time window. \textit{TS} recorded the latest appearance time of its corresponding opposite hosts. At a time point, the \textit{TS} is inactive if the difference between the current time and \textit{TS} is greater than the size of the time window. The state of \textit{TS} can be obtained at any time. Because the sliding time window keeps moving forward, \textit{TS} must be large enough, such as 32 or 64 bits. High memory requirements limit the number of timestamps.

For each counter, at the end of a time slice, we only want to determine whether it is active or not. In other words, we are interested in whether the latest opposite host appears in the current time window. Since the length of each time slice is invariant and a sliding time window contains $k$ time slices at most, $log_2(k + 1)$ bits are theoretically sufficient to preserve the counter's state. Distance Recorder (\textit{DR}) \cite{ISPA2017:HighSpeedNetworkSuperPointsDetectionBasedSlidingWindowGPU} is proposed under this idea. Each \textit{DR} occupies $ceil(log_2(k + 1))$ bits, where the $ceil(x)$ function represents the smallest integer no smaller than x. Each bit of \textit{DR} is set to 1 at the beginning. The \textit{DR} is set to 0 if the corresponding host of the \textit{DR} appears in the current time slice. Each time the time window slides, the state of \textit{DR} needs to be updated, that is, the value of \textit{DR} plus 1. At the end of time slice $t$, if the value of \textit{DR} is less than $k'$, the \textit{DR} is active in $W(t-k'+1, k')$. Unlike \textit{TS}, \textit{DR} uses only $ceil(log_2(k + 1))$ bits. But the state of \textit{DR} needs to be maintained at the end of each time slice. When the number of \textit{DR}s is large, maintaining the state of each \textit{DR} is a heavy task. The state-maintenance time of \textit{DR} limits the number of \textit{DR}s available.

Both \textit{TS} and \textit{DR} have their strengths and weaknesses. Can we use less memory and less maintenance time to determine the state of the counter? If we want to save memory, we must regularly maintain the state of each counter and mark inactive ones; if we want fewer state-maintenance operations, we can reduce the number of counters whose state need to be maintained in each time slice. Inspired by this idea, we propose a new counter, asynchronous timestamp (\textit{AT}). \textit{AT} is a tradeoff between memory consumption and state-maintenance time. It contains $ceil(log_2(2*k+1))$ bits. Although \textit{AT} has one bit more than \textit{DR}, \textit{AT} only needs to maintain its state once every $k$ time slices. In other words, \textit{AT} reduces the time complexity of state-maintenance to $O(1/k)$ at the cost of an extra bit. Based on \textit{AT}, an efficient parallel algorithm, virtual asynchronous timestamp estimator (\textit{VATE}), is proposed to estimate the cardinalities under sliding time windows. The main contributions of this paper are as follows.

\begin{itemize}
\item A new counter, asynchronous timestamp (\textit{AT}), is proposed. \textit{AT} only occupies $ceil(log_2(2*k+1))$ bits and can maintain its state once every $k$ time slices. \textit{AT} can determine its state at the end of each time slice and is suitable for running in the parallel environment.
\item A new algorithm for estimating the cardinalities under sliding time window is proposed, which is called virtual asynchronous timestamp estimation algorithm \textit{VATE}. \textit{VATE} uses a fixed number of $AT$s to estimate cardinalities. Because a single \textit{AT} takes up less memory and has low state-maintenance time, \textit{VATE} can contain more $AT$s to achieve higher estimation accuracy than other algorithms.
\item The \textit{VATE} algorithm is deployed in GPU to acquire real-time cardinalities estimation in high-speed network under sliding time windows.
\item Using real high-speed network data, we compare the accuracy of \textit{VATE} under different parameters and the running time on different GPU.
\end{itemize}

This article is arranged as follows. In the next section, we will introduce existing works of cardinality estimation under sliding time window. In Section 3, we will describe how \textit{AT} works and why it can use less memory and less maintenance time to determine its own state. Section IV presents an \textit{AT}-based algorithm for estimating cardinalities under sliding time window, \textit{VATE}. It also introduces how to deploy \textit{VATE} on GPU and realize the cardinalities estimation in parallel. In Section 5, the experiments of real high-speed network traffic are shown. In section 6, we summarize this paper.

\section{Background \& Related works}
\subsection{Cardinality estimation}
For a host $aip$ in \textit{ANet}, let $Pkt(aip, t, 1)$ represent the set of all packets that communicate with $aip$ through \textit{ER} in the time window $W(t, 1)$. From each packet in $Pkt(aip, t, 1)$, an IP address pair like $<aip, bip>$ can be extracted, in which $bip$ is the opposite host of aip in the packet. Let $IPair(aip, t, 1)$ represent the stream of IP address pairs extracted from $Pkt(aip, t, 1)$. Because an opposite host $bip$ may send or receive multiple packets in a time window, the same IP address pair may appear many times in $IPair(aip, t, 1)$. The cardinality of aip is calculated by scanning $IPair(aip, t, 1)$ to get the number of distinct IP address pairs, namely $|OP(aip, t, 1)|$.

In recent years, many excellent cardinality estimation algorithms have been proposed. These algorithms\cite{PCSA:ProbabilisticCountingAlgorithmsForDataBaseApplications}\cite{DC:LoglogCountingOfLargeCardinalitiesDurand2003}\cite{DC:HyperLogLogTheAnalysisOfANearoptimalCardinalityEstimationAlgorithm} mostly use a vector containing $g$ counters to estimate the cardinality of a host. These algorithms differ in what is saved in each counter, how to update the counter, and how to estimate the cardinality from the counter vector.

Flajolet et al \cite{PCSA:ProbabilisticCountingAlgorithmsForDataBaseApplications} proposed a cardinality estimation algorithm called PCSA. The counter used by PCSA is a bitmap containing 32 bits. Each host is randomly mapped to a counter, and its lowest significant bit is stored in the counter. When all IP address pairs are scanned, the cardinality is estimated based on the values of all counters.

The task of each counter in PCSA is to record the lowest significant bit of opposite hosts. For IPv4 addresses, the maximum lowest significant bit is 32, which can be expressed in 5 bits. But PCSA uses 32 bits, leaving much room for improvement. For this reason, Philippe et al \cite{DC:LoglogCountingOfLargeCardinalitiesDurand2003} proposed LogLog algorithm. Unlike PCSA, each counter in LogLog records the position of the ``1" bit on the leftmost side of opposite hosts. LogLog estimates the cardinality based on the geometric mean of all counters. Many algorithms are derived from LogLog. Flajolet et al \cite{DC:HyperLogLogTheAnalysisOfANearoptimalCardinalityEstimationAlgorithm} found that when using the harmonic mean of all counters, the accuracy of the estimation results would be improved. Based on this idea, the HyperLogLog algorithm is proposed. MinCount\cite{DC2009:OrderStatisticsEstimatingCardinalitiesMassiveDataSets} is another algorithm similar to LogLog. But it first hashes each host evenly to a real number between [0,1], and each counter stores the minimum hash value it has seen. The size of each counter can be adjusted according to different accuracy.

Although the above algorithms are effective for large cardinality, their accuracy is limited. Whang et al \cite{DC:aLinearTimeProbabilisticCountingDatabaseApp} proposed a high-precision cardinality estimation algorithm based on the maximum likelihood principle, linear estimator LE. A counter in LE has only one bit, and all counters are initialised to zero at the beginning. Each opposite host is mapped to a counter in LE by a random hash function\cite{hash_UniversalClassesOfHashFunctions}. When an opposite host appears, its corresponding counter will be set to 1. At the end of a time window, LE estimates the cardinality of $aip$ according to the following formula, where $g_0$ is the number of `0' bits.
\begin{equation}
\label{eq_linearEstimator}
|OP(aip,t, k)|=-g*ln(\frac{g_0}{g})
\end{equation}

The accuracy of the cardinality estimation algorithm is evaluated by standard error\cite{TON2017_CardinalityEstimationElephantFlowsACompactSolutionBasedVirtualRegisterSharing}. Let n denote the cardinality of $aip$ and n' denote the estimated value obtained by an algorithm. The standard deviation of the estimation algorithm is the standard deviation of n/n', which is recorded as $\sigma$. Table \ref{tbl_cardinalityEstimating_compare} shows the accuracy and memory consumption of different algorithms when $\sigma= 1$ and n = 5000.
\begin{table}
\centering
\caption{Different cardinality estimator compare}
\label{tbl_cardinalityEstimating_compare}
\begin{tabular}{c}                                                                                                                                                                                                                           
\centering
\includegraphics[width=0.45\textwidth]{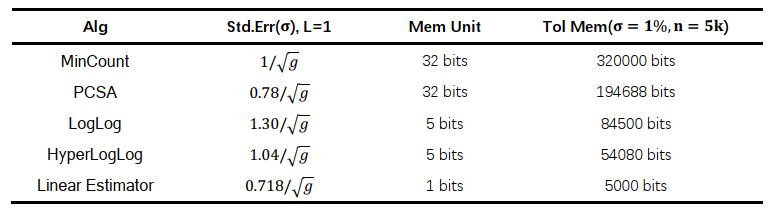}
\end{tabular}
\end{table}

From Table \ref{tbl_cardinalityEstimating_compare}, we can see that LE uses the smallest memory to achieve the same accuracy as other algorithms. In this paper, based on the principle of LE, we design a cardinality estimation algorithm under the sliding time window.

\subsection{Sliding time window and distinct time window}
As shown in Figure \ref{fig_sliding_and_discrete_time_window}, discrete time window and sliding time window are two commonly used time windows in cardinality estimation. The sliding time window moves forward one time slice at a time. Therefore, two adjacent sliding time windows containing $k$ time slices have $k-1$ same time slices. In Figure 1, the size of the time slice is set to 1 second for the sliding time window and 300 seconds for the discrete time window. Sliding time window and discrete time window have the same length, but the sliding time window contains 300 time slices. The granularity of the time slice is smaller than that of the discrete time window.

The cardinality estimation under the discrete time window is relatively simple because it does not need to save the state of the previous time slices. However, the estimation results of cardinalities are influenced by the time window boundary, and the estimation delay is significant. Compared with the discrete time window, the cardinality calculation under the sliding time window has higher accuracy because the sliding time window can monitor network traffic\cite{SDC2010SHLL:SlidingHyperLogLogEstimatingCardinalityDataStreamOverSlidingWindow} on a finer-grained time slice.

There are many algorithms\cite{SDC2007:EstimatingNumberActiveFlowsDataStreamOverSlidingWindow}\cite{SDC2010SHLL:SlidingHyperLogLogEstimatingCardinalityDataStreamOverSlidingWindow}\cite{GLOBECOM2003_CountingNetworkFlowsInRealTime} to solve this problem by using better-functioning counters.

Fusy et al \cite{SDC2007:EstimatingNumberActiveFlowsDataStreamOverSlidingWindow} improved MinCount by using a data structure called List of Future Possible Minima(LFPMin), and proposed a Sliding MinCount. Sliding MinCount replaces each counter in the MinCount with LFPMin so that it can save the smallest packet in the current time window. But the accuracy of Sliding MinCount depends on MinCount, and the length of LFPMin increases with the number of IP address pairs. Inspired by LFPMin, Chabchoub et al \cite{SDC2010SHLL:SlidingHyperLogLogEstimatingCardinalityDataStreamOverSlidingWindow} proposed a data structure called List of Future Possible Maxima(LFPMax) to replace each counter in HyperLogLog, and proposed Sliding HyperLogLog. The accuracy of Sliding HyperLogLog depends on HyperLogLog, and the length of LFPMax increases with the number of IP address pairs. The size of LFPMin and LFPMax is not fixed. When a time window contains $k$ time slices, LFPMin and LFPMax can contain $k$ elements at most or only one element at least. The sizes of LFPMin and LFPMax need to be adjusted dynamically at runtime or reserved the maximum space before running. Dynamically adjusting the size of memory will increase the operating burden of the system, while reserving the maximum required memory space will waste a lot of memory.

Because LE algorithm has high estimation accuracy and simple operation, many algorithms are designed or improved on the basis of LE. Kim et al \cite{GLOBECOM2003_CountingNetworkFlowsInRealTime} replaces bit vectors in LE with time stamp vectors, and proposes a cardinality estimation algorithm under sliding time window based on timestamp vectors, TSV. Each timestamp contains 64 bits. TSV can estimate the cardinality under any time window of any size. But in practice, we don't need to query the cardinality of the host in this way. For windows with $k$ time slices, the size of each counter can be as small as only $log_2(k + 1)$ bits, such as 
\textit{DR} proposed by Jie et al\cite{ISPA2017:HighSpeedNetworkSuperPointsDetectionBasedSlidingWindowGPU}.

A \textit{DR} only occupies $log_2(k + 1)$ bits to record its own state of activity accurately. \textit{DR} can replace each counter in the ACE \cite{Iwqos2017:ACE:PerflowCountingForBigNetworkDataStreamOverSlidingWindows} to estimate cardinalities under sliding time window and we denote this new algorithms as \textit{VDRE}. But when the time window slides, the values of each \textit{DR} need to be updated to maintain their state. When a large number of counters are included, the state-maintenance operation of \textit{DR} takes a lot of time and becomes the bottleneck of real-time running.

In order to improve the speed, Jingsong et al proposed CVS\cite{HSD2016_CVSFastCardinalityEstimationForLargeScaleDataStreamsOverSlidingWindows} algorithm and LRU-Sketch\cite{SDC2017:FastCountingCardinalityOfFlowsBigTrafficOverSlidingWindows} algorithm. CVS adopts the strategy of random updating, and randomly selects a part of the counter to update each time. However, the method of random updating will increase the error of the estimation results. The LRU-Sketch algorithm uses the principle of memory page replacement to delete inactive counters. But the LRU-Sketh algorithm runs in a special sliding time window, that is, there is at most one packet in a time slice.  

In this paper, an algorithm called virtual asynchronous timestamp vector(\textit{VATE}) is proposed to estimate cardinalities under the sliding time window. \textit{VATE} reduces memory occupancy and state-maintenance time at the same time. Table \ref{tbl_slidingCE_compare} compares the differences between these algorithms. All these sliding cardinality estimation algorithms are based on some classical estimation algorithms in the ``basic algorithm" column. ``Memory Unit" column is the number of bits required by each counter. The column ``PT" represents the time complexity of the state-maintenance operation of a counter in each time slice. 
\begin{table}
\centering
\caption{Different sliding time cardinality estimators compare}
\label{tbl_slidingCE_compare}
\begin{tabular}{c}                                                                                                                                                                                                                           
\centering
\includegraphics[width=0.45\textwidth]{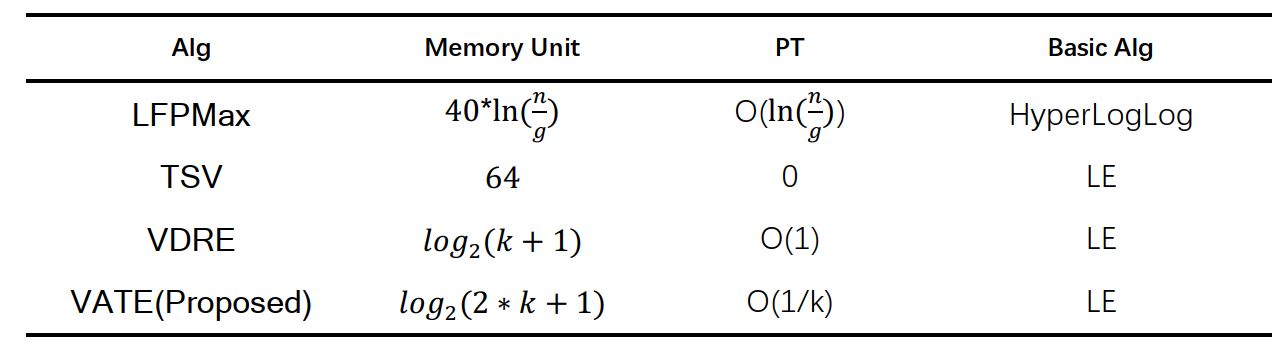}
\end{tabular}
\end{table}

In Table \ref{tbl_slidingCE_compare}, $n$ denotes the number of distinct IP address pairs in a time window, $g$ denotes the number of counters, and $k$ indicates the max number of time slices in a window. Comparing with LFPMax and TSV, \textit{VATE} reduces the memory requirements of each counter. Although the counter used in \textit{VATE} contains one bit more than that in CDV, the state-maintenance time of \textit{VATE} is decreased greatly. \textit{VATE} is also a multi-cardinalities estimation algorithm, that is, it can simultaneously estimate multiple hosts' cardinalities.

\subsection{Multi cardinalities estimation}
In the core network, there are a large number of hosts. The direct way to get all these hosts' cardinalities is to assign an estimator to each of them. But it wastes a lot of memory. Some efficient algorithms use a fixed number of counters to calculate cardinalities of all hosts. These algorithms can be divided into two categories: the algorithm based on estimator matrix and the algorithm based on virtual estimator.

The estimator-matrix based algorithm estimates cardinalities of all hosts using the estimator matrix of the u-row and v-column. For each host, an estimator corresponds to it in each row, i.e. a host cardinality is estimated by using u estimators at the same time. For example, Wang et al designed DCDS\cite{HSD:ADataStreamingMethodMonitorHostConnectionDegreeHighSpeed} based on Chinese remainder theorem; Liu et al designed VBF\cite{HSD:DetectionSuperpointsVectorBloomFilter} based on Bloom filter principle.

The algorithm based on virtual estimator assigns a logical estimator to each host. The counter of each logical estimator is randomly selected from a counter pool, as shown in Figure \ref{fig_virtural_counter_vector_illustrate}.
\begin{figure}[!ht]
\centering
\includegraphics[width=0.47\textwidth]{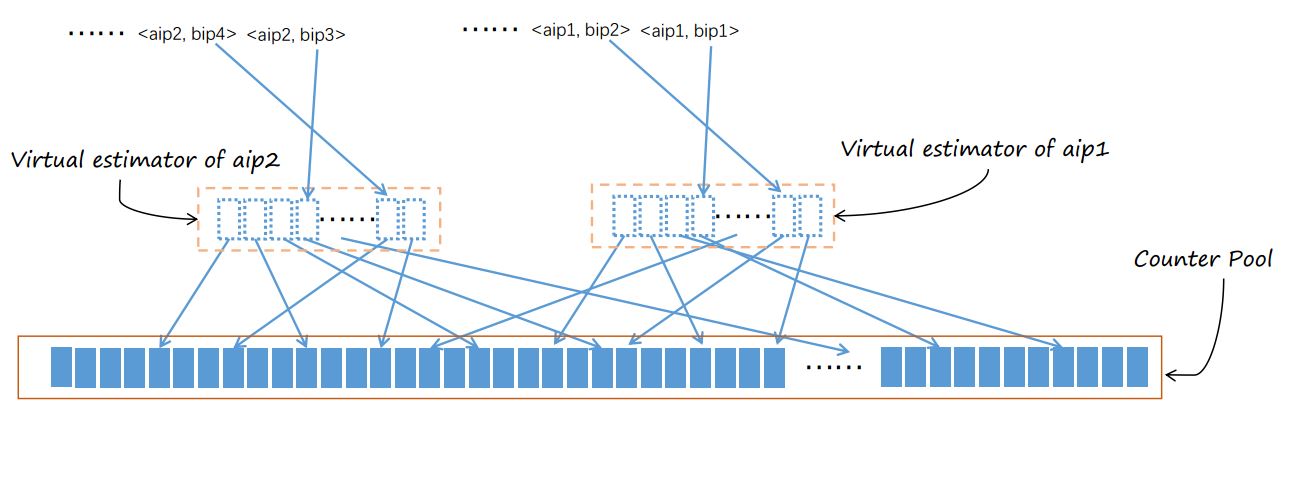}
\caption{Virtual counter vector}
\label{fig_virtural_counter_vector_illustrate}
\end{figure}

The virtual estimator can be LE\cite{HSD:SpreaderClassificationBasedOnOptimalDynamicBitSharing}\cite{HSD:GPU:2014:AGrandSpreadEstimatorUsingGPU}, HyperLogLog \cite{TON2017_CardinalityEstimationElephantFlowsACompactSolutionBasedVirtualRegisterSharing} and so on.
Every counter in the virtual counter vector is relative to a physical counter in the pool. A physical counter is shared by several virtual counter vector. Let $g$ represent the number of counters in a virtual counter vector and $2^c$ represent the number of counters in the pool. A host's cardinality is estimated by the equation \ref{eq_ve_cardinalityEst}.In equation \ref{eq_ve_cardinalityEst}, $Z_v$ represent the fraction of inactive counters in $aip$'s virtual counter vector and $Z_p$ represent the fraction of inactive counters in the counters pool.
\begin{equation}
\label{eq_ve_cardinalityEst}
|OP(aip,t,k)|=g*ln(Z_v)-g*ln(Z_p)
\end{equation}
Equation \ref{eq_ve_cardinalityEst} removes the bias caused by counters sharing. This paper also uses virtual counter vector to estimate host cardinality. But we replace the counter in the pool with \textit{AT} to work under sliding time window.
Unlike estimator-matrix based algorithms, virtual estimator based algorithms only need to update a counter for each packet, because a host only corresponds to an estimator. The \textit{VATE} proposed in this paper is an algorithm based on virtual estimator. \textit{VATE} uses \textit{AT} as the counter of the virtual estimator to estimate multi-host cardinalities under the sliding time window. \textit{AT} and \textit{VATE} are described below in detail.

\section{Asynchronous timestamp}
In a discrete-time window, one bit is sufficient to indicate whether the host has appeared in the current time window. For example, a bit is initialised to zero at the beginning of a time window. It will be set to 1 if one or some of its opposite hosts appear in the time window; otherwise, it will remain zero. But a bit has only two possible values, that is, the counter is active or inactive in the current time slice. Counter having only one bit limits the application of  cardinality estimation under the sliding time window. The critical step in estimating cardinality under sliding time window is to determine whether the counter is active in the current time window(in successive $k'$ time slices). In this section, we will introduce a new counter, Asynchronous Timestamp (\textit{AT}), to solve this problem.

\textit{AT} is a counter consisting of $ceil(log_2(2*k+1))$ bits. So \textit{AT} can represent at least $2*k+1$ different values, from 0 to $2*k$. Each \textit{AT} also has an asynchronous current timestamp (\textit{ACT}). \textit{ACT} is a runtime variable that does not occupy memory space and has a range of values from 0 to $2*k-1$. The \textit{ACT} remains unchanged for a time slice. Let \textbf{\textit{at}} denote an instance of \textit{AT}, $Value(\textbf{\textit{at}})$ denotes the current value of \textbf{\textit{at}}, and $ACT_t^{at}$ denotes the value of the asynchronous current timestamp of \textbf{\textit{at}} in time slice $t$. The relationship between the values of \textit{ACT} in two adjacent time slices conforms to the equation \ref{eq-adjecentTS-ACT}.
\begin{equation}
\label{eq-adjecentTS-ACT}
ACT_{t+1}^{\textbf{\textit{at}}}=(ACT_{t}^{\textbf{\textit{at}}}+1)mod(2*k)
\end{equation}

One of the key tasks of \textit{AT} is to determine whether it is active in the current time window based on its recorded value and \textit{ACT}. Let the value $2*k$ denote the inactive state of \textit{AT}, that is, if $Value(\textbf{\textit{at}})$ equals $2*k$, then \textbf{\textit{at}} is inactive. When $Value(\textbf{\textit{at}})$ is less than $2*k$, its activity needs to be determined by $ACT^{\textbf{\textit{at}}}_t$. \textit{AT} has four operations: initializing($InitAT()$), setting($SetAT()$), maintaining state($PreserveAT()$) and checking state($CheckAT()$). In the following section, t represents the current time slice if there is no special explanation.
\begin{itemize}
\item $InitAT(\textbf{\textit{at}})$: Set the value of \textbf{\textit{at}} to $2*k$. \textit{AT} is set inactive at initialization.
\item $SetAT(\textbf{\textit{at}})$: Set the value of \textbf{\textit{at}} to $ACT^{\textbf{\textit{at}}}_t$.
\item $CheckAT(\textbf{\textit{at}}, k')$: At the end of the time slice $t$, return whether \textbf{\textit{at}} is active in the time window $W(t-k'+1, k')$, and $k'$ is a positive integer not greater than $k$. The specific operation is shown in the algorithm \ref{alg-checkAT_at_k}.
\item $PreserveAT(\textbf{\textit{at}})$: Update the value of \textbf{\textit{at}} at the beginning of some time slices, as shown in the algorithm \ref{alg-preserveAT_at}.
\end{itemize}

\begin{algorithm}         
\caption{checkAT}  
\label{alg-checkAT_at_k}  
\begin{algorithmic}[1]
\Require {
 Asynchronous timestamp $\textbf{\textit{at}}$,
 Time slices number $k'$} 
\Ensure{ActiveState} 
\State $act \Leftarrow ACT_{t}^{at}$
\If{$Value(\textbf{\textit{at}}) == 2*k$}
\State Return False
\EndIf
\State $dis \Leftarrow (act+2*k-Value(\textbf{\textit{at}})) mod (2*k)$ \label{line-disCal-alg-checkAT_at_k}
\If{$dis \leq k'-1$}
\State Return Ture
\Else
\State Return False
\EndIf
\end{algorithmic}
\end{algorithm}

\begin{algorithm}                       
\caption{preserveAT}          
\label{alg-preserveAT_at}                            
\begin{algorithmic}[1]                  
\Require {
 Asynchronous timestamp $\textbf{\textit{at}}$,
 Max time slices number $k$} 
\State $act \Leftarrow  ACT_{t}^{at}$
\State $dis \Leftarrow (act+2*k-Value(\textbf{\textit{at}})) mod (2*k) $
\If{$dis \geq k$}
\State $Value(\textbf{\textit{at}}) \Leftarrow 2*k$
\EndIf
\end{algorithmic}
\end{algorithm}

Firstly, the algorithm \ref{alg-checkAT_at_k} checks whether $Value(\textbf{\textit{at}})$ is $2*k$, and if so, decides directly that \textbf{\textit{at}} is inactive. If not, $ACT_t^{at}$ is used to further determine whether \textbf{\textit{at}} is active. According to $SetAT()$ operation, it is known that \textit{AT} saves its \textit{ACT} of time slice where its opposite hosts latest appearing. Assuming that \textbf{\textit{at}} was last set in the time slice $t'$, and its state has not been maintained by the operation $PreserveAT()$, then at the end of the time slice $t$, $Value(\textbf{\textit{at}}) = ACT^{at}_{t'}$. According to the definition \ref{def-actveCounter}, if $t-t'<k'$, \textbf{\textit{at}} is active in $W(t-k'+1, k')$, otherwise it is inactive. We call $t-t'$ the distance between \textbf{\textit{at}} and the current time slice, abbreviated as $Dist(\textbf{\textit{at}})$. According to the formula \ref{eq-adjecentTS-ACT}, the relationship between $ACT^{at}_t$, $Value(\textbf{\textit{at}})$ and $Dist(\textbf{\textit{at}})$ can be obtained, as shown in the equation \ref{eq-act_value_at_dist_at}.
\begin{equation}
\label{eq-act_value_at_dist_at}
ACT^{at}_t = (Value(\textbf{\textit{at}}) + Dist(\textbf{\textit{at}}))mod (2*k) 
\end{equation}

To reduce memory consumption, \textit{AT} records the value of \textit{ACT} rather than the number of time slices directly. So $Dist(\textbf{\textit{at}})$ can only be calculated from $ACT_t^{at}$. When the value of $Dist(\textbf{\textit{at}})$ does not exceed $2*k$, the calculation formula of $Dist(\textbf{\textit{at}})$, equation \ref{eq-get_dist_at_from_act}, can be obtained from the formula \ref{eq-act_value_at_dist_at}, i.e. the calculation method of line\ref{line-disCal-alg-checkAT_at_k} in the algorithm \ref{alg-checkAT_at_k}.
\begin{equation}
\label{eq-get_dist_at_from_act}
Dist(\textbf{\textit{at}}) =(ACT^{at}_t + 2*k -Value(\textbf{\textit{at}})) mod (2*k)
\end{equation}

According to whether $Dist(\textbf{\textit{at}})$ is less than $k'$, we can get whether \textbf{\textit{at}} is active in the time window $W(t-k'+1, k')$. When $Dist(\textbf{\textit{at}})$ is less than $2*k$, $CheckAT(\textbf{\textit{at}}, k')$ can correctly determine whether \textbf{\textit{at}} is active. However, when $Dist(\textbf{\textit{at}})$ is greater than or equal to $2*k$, $CheckAT(\textbf{\textit{at}}, k')$ may identify inactive \textbf{\textit{at}} as active \textbf{\textit{at}}. For example, when $Value(\textbf{\textit{at}})= ACT^{at}_{t-2*k}$, $Dist(\textbf{\textit{at}}) = 2*k$. However, the value of $Dist(\textbf{\textit{at}})$ calculated from the formula \ref{eq-get_dist_at_from_act} is 0, and $CheckAT(\textbf{\textit{at}}, k')$ will incorrectly determine the state of \textbf{\textit{at}} as active. So the state of \textbf{\textit{at}} must be maintained regularly to ensure that $Dist(\textbf{\textit{at}})$ does not exceed $2*k$ (if $Dist(\textbf{\textit{at}})$ exceeds $2*k$, then \textbf{\textit{at}} is marked inactive, that is, the value of \textbf{\textit{at}} is set to $2*k$). The $PreserveAT()$ operation updates the state of the \textit{AT} at the beginning of a time slice, as shown in the algorithm \ref{alg-preserveAT_at}. The purpose of $PreserveAT()$ operation is to enable $CheckAT(\textbf{\textit{at}}, k')$ operation to accurately determine whether \textbf{\textit{at}} is active. Because $k'$ is not greater than $k$, when $Dist(\textbf{\textit{at}})$ is greater than or equal to $k$, $\textbf{\textit{at}}$ is inactive, that is, when $Dist(\textbf{\textit{at}})$ is greater than or equal to $k$, $PreserveAT(\textbf{\textit{at}})$ marks \textbf{\textit{at}} inactive.

The state of \textit{AT} does not need to be maintained at the beginning of each time slice. The following theorem and corollary show that the state of \textit{AT} can be maintained once every $k$ time slices, and the time complexity of state-maintenance is $O(1/k)$.

\begin{theorem}
\label{thm-psvAT}
Let \textbf{\textit{at}} be an asynchronous timestamp. When the state of \textbf{\textit{at}} is maintained by $PreserveAT()$ operation at the beginning of time slice $t'$, then in time slice $t'$ and the $k-1$ time slices after $t'$, $CheckAT(\textbf{\textit{at}}, k')$ can accurately give whether \textbf{\textit{at}} is active, where $k'\leq k$.
\end{theorem}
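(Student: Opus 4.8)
The plan is to reduce the claim to a bound on the \emph{true distance} $Dist(\textbf{\textit{at}})$ --- the number of time slices elapsed since the most recent $SetAT(\textbf{\textit{at}})$ --- and then to follow how $Dist(\textbf{\textit{at}})$ and $Value(\textbf{\textit{at}})$ behave over the $k$ slices $t',t'+1,\dots,t'+k-1$ that follow the $PreserveAT()$ call.

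First I would pin down the arithmetic. While $Value(\textbf{\textit{at}})<2k$, the value was last written by a $SetAT()$ in some slice $t''$, so $Value(\textbf{\textit{at}})=ACT^{at}_{t''}$, and iterating \ref{eq-adjecentTS-ACT} gives exactly \ref{eq-act_value_at_dist_at}; hence the quantity $dis$ computed in line \ref{line-disCal-alg-checkAT_at_k} of Algorithm \ref{alg-checkAT_at_k} --- and in the matching line of Algorithm \ref{alg-preserveAT_at} --- equals $Dist(\textbf{\textit{at}})\bmod 2k$, which coincides with $Dist(\textbf{\textit{at}})$ itself precisely when $Dist(\textbf{\textit{at}})<2k$. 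From this I extract the correctness criterion I am aiming at: by Definitions \ref{def-activeOppositeHost} and \ref{def-actveCounter}, ``$\textbf{\textit{at}}$ is active in $W(t-k'+1,k')$'' means the latest $SetAT()$ on $\textbf{\textit{at}}$ falls in $\{t-k'+1,\dots,t\}$, i.e. $Dist(\textbf{\textit{at}})\le k'-1$; hence $CheckAT(\textbf{\textit{at}},k')$ at the end of slice $t$ answers correctly whenever \emph{either} $Value(\textbf{\textit{at}})=2k$ and the latest $SetAT()$ on $\textbf{\textit{at}}$ (if any) lies in a slice $\le t-k$, \emph{or} $Value(\textbf{\textit{at}})<2k$ and $Dist(\textbf{\textit{at}})<2k$. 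The only failure mode is a modular wrap, i.e. $Value(\textbf{\textit{at}})<2k$ together with $Dist(\textbf{\textit{at}})\ge 2k$.

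Next I would use two facts, both consequences of the arithmetic just recorded together with the invariant discussed in the last paragraph (at every maintenance, $Value(\textbf{\textit{at}})=2k$ or $Dist(\textbf{\textit{at}})<2k$): \textbf{(a)} whenever $Value(\textbf{\textit{at}})=2k$ at the end of a slice $t$, tracing back to the operation that last set $Value(\textbf{\textit{at}})$ to $2k$ --- an $InitAT()$, or a $PreserveAT()$ that acted on a $Value(\textbf{\textit{at}})<2k$ with $Dist(\textbf{\textit{at}})\ge k$ --- shows the latest $SetAT()$ on $\textbf{\textit{at}}$, if any, is in a slice $\le t-k$; \textbf{(b)} right after $PreserveAT()$ at the start of $t'$, if $Value(\textbf{\textit{at}})<2k$ then $Dist(\textbf{\textit{at}})\le k-1$, since under the invariant $PreserveAT()$ sees $dis=Dist(\textbf{\textit{at}})$ and clears $\textbf{\textit{at}}$ exactly when $Dist(\textbf{\textit{at}})\ge k$. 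Now fix $t$ with $t'\le t\le t'+k-1$ and run $CheckAT(\textbf{\textit{at}},k')$ at the end of $t$. If $Value(\textbf{\textit{at}})=2k$, then by (a) the latest $SetAT()$ is in a slice $\le t-k\le t-k'$, outside $W(t-k'+1,k')$, so the ``inactive'' answer is right. Otherwise $Value(\textbf{\textit{at}})<2k$, and then $Dist(\textbf{\textit{at}})<2k$ at slice $t$: if a $SetAT()$ occurred during $(t',t]$ then $Dist(\textbf{\textit{at}})\le t-t'-1\le k-2$, and if none did then the latest $SetAT()$ is unchanged since $t'$ so by (b) $Dist(\textbf{\textit{at}})\le (k-1)+(t-t')\le 2k-2$. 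Either way the correctness criterion applies, which is the theorem. (A $PreserveAT()$ happening to fall inside $(t',t]$ is harmless: under the invariant it only performs a legitimate active-to-inactive clear.)

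The step I expect to be the real obstacle is discharging the invariant both (a) and (b) lean on: at entry to every $PreserveAT()$, $Value(\textbf{\textit{at}})=2k$ or $Dist(\textbf{\textit{at}})<2k$, and immediately after every $PreserveAT()$, $Value(\textbf{\textit{at}})=2k$ or $Dist(\textbf{\textit{at}})\le k-1$. This is \emph{not} implied by the isolated statement --- a long-stale $\textbf{\textit{at}}$ whose distance has wrapped past $2k$ without being cleared would afterwards be reported ``active'' in error --- and it is exactly where the hypothesis ``once every $k$ slices'' is needed. I would prove it by induction over the successive maintenance slices: base case, the first maintenance (at most $k$ slices after $InitAT()$), where $Value(\textbf{\textit{at}})=2k$ or $Dist(\textbf{\textit{at}})\le k<2k$; inductive step from a maintenance at $m$ to the next at $m+k$, with no maintenance in between --- if a $SetAT()$ fell in $\{m,\dots,m+k-1\}$ then $Dist(\textbf{\textit{at}})\le k<2k$ at $m+k$, and otherwise the state at $m+k$ equals the post-$m$ state, namely $2k$ or $<2k$ with $Dist(\textbf{\textit{at}})\le(k-1)+k=2k-1<2k$ --- so the entry clause holds at $m+k$, and the after clause then follows from the effect of $PreserveAT()$ exactly as in (b). The remainder is routine modular book-keeping, and the companion corollary follows by invoking the theorem at each maintenance slice in turn.
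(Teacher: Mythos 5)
Your proposal is correct, and its core case analysis is the same as the paper's: split on whether a $SetAT()$ occurred since the maintenance at $t'$, and in the no-set case split on whether $PreserveAT()$ cleared \textbf{\textit{at}} to $2k$ or left it with distance below $k$, so that the distance at query time stays below $2k$ and the modular computation in line \ref{line-disCal-alg-checkAT_at_k} of Algorithm \ref{alg-checkAT_at_k} recovers the true distance. Where you genuinely depart from the paper is the invariant you flag as ``the real obstacle'': that at entry to every $PreserveAT()$ either $Value(\textbf{\textit{at}})=2k$ or $Dist(\textbf{\textit{at}})<2k$. The paper's proof asserts that ``the value of \textbf{\textit{at}} remains unchanged only when the distance of \textbf{\textit{at}} is less than $k$,'' but Algorithm \ref{alg-preserveAT_at} only ever sees $Dist(\textbf{\textit{at}})\bmod 2k$, so this assertion silently presupposes exactly your invariant --- a counter left unmaintained for more than $2k$ slices could wrap, survive the maintenance with a small apparent distance, and then be misreported as active. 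Your induction over successive maintenance epochs (seeded by $InitAT()$ and stepped by the once-every-$k$-slices schedule) is what actually discharges this, and it is the reason Corollary \ref{cly-preserveAT_at_act_0_k} holds; in effect you prove the theorem and corollary as a package, while the paper proves the theorem modulo an unstated hypothesis on the history of \textbf{\textit{at}}. Your fact (a) is also an improvement: the paper only argues that $CheckAT$ \emph{reports} inactive when the value is $2k$, whereas you verify the report is \emph{true} by tracing the clear back to a genuine distance of at least $k$. One cosmetic slip: a $SetAT()$ falling inside slice $t'$ itself (after the maintenance) is not covered by your interval $(t',t]$ nor by ``unchanged since $t'$,'' but either branch of your bound still gives $Dist(\textbf{\textit{at}})\le k-1<2k$ there, so nothing breaks.
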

\begin{proof}
Let $t''$ denote any time slice between time slice $t'$ and time slice $t'+k-1$. According to the setting operation of \textit{AT} and the algorithm \ref{alg-checkAT_at_k}, if \textbf{\textit{at}} is set in a time slice between $t'$ and $t''$ ($setAT (\textbf{\textit{at}})$), then $CheckAT(\textbf{\textit{at}}, k')$ can output the exact state of \textbf{\textit{at}}.

Next, we discuss the case where \textbf{\textit{at}} is not set between $t'$ and $t''$. Let $v_0$ and $d_0$ denote the value and distance of \textbf{\textit{at}} before it was maintained by $PreserveAT()$ operation at the beginning of $t'$. When the $PreserveAT()$ operation is used to maintain the state of \textbf{\textit{at}}, the value of \textbf{\textit{at}} has two situations: the value of \textbf{\textit{at}} becomes $2*k$(\textbf{\textit{at}} becomes inactive) and the value of \textbf{\textit{at}} is still $v_0$. For the first case, $CheckAT(\textbf{\textit{at}}, k')$ can accurately identify \textbf{\textit{at}} as inactive at the end of time slice $t''$. For the second case, $CheckAT(\textbf{\textit{at}}, k')$ needs to determine whether \textbf{\textit{at}} is active according to the $Dist(\textbf{\textit{at}})$ at the end of $t''$. If the distance at the end of $t''$ does not exceed $2*k$, $CheckAT(\textbf{\textit{at}}, k')$ can give the state of \textbf{\textit{at}} accurately. According to the process of $PreserveAT()$ operation, the value of \textbf{\textit{at}} remains unchanged only when the distance of \textbf{\textit{at}} is less than $k$. So at the end of $t''$, if $Value(\textbf{\textit{at}})$ equals $v_0$, then $d_0 < k$. At the end of $t''$, $Dist(\textbf{\textit{at}}) = d_0 + t''-t'$, and $t''-t'< k' \leq k$. So $Dist(\textbf{\textit{at}})< 2*k$, and $CheckAT(\textbf{\textit{at}}, k')$ can determine whether \textbf{\textit{at}} is active according to whether $Dist(\textbf{\textit{at}})$ is less than $k'$.

In summary, for any of the above cases, $CheckAT(\textbf{\textit{at}}, k')$ can accurately give the state of \textbf{\textit{at}} in the time slice $t'$ and the $k-1$ time slices after $t'$.
\end{proof}

Based on theorem \ref{thm-psvAT}, corollary \ref{cly-preserveAT_at_act_0_k} gives a method to determine when to maintain the state of \textit{AT} according to the value of \textit{ACT}.
\begin{corollary}
\label{cly-preserveAT_at_act_0_k}
Let \textbf{\textit{at}} be an asynchronous timestamp. If $PreserveAT()$ is performed on \textbf{\textit{at}} when $ACT_t^{at}$ is 0 or $k$, at the end of any time slices, $CheckAT(\textbf{\textit{at}}, k')$ can accurately give whether \textbf{\textit{at}} is active or not, where $k' \leq k$.
\end{corollary}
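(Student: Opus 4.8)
The plan is to treat Theorem \ref{thm-psvAT} as a black box and to show that, under the maintenance schedule ``run $PreserveAT(\textbf{\textit{at}})$ whenever $ACT^{at}_t \in \{0, k\}$'', every time slice lies inside one of the length-$k$ validity windows that the theorem supplies. First I would describe the dynamics of $ACT^{at}_t$: by equation \ref{eq-adjecentTS-ACT} the quantity $ACT^{at}_t$ cycles through the residues $0, 1, \dots, 2k-1$ with period $2k$. Hence the event $ACT^{at}_t = 0$ recurs every $2k$ time slices and the event $ACT^{at}_t = k$ recurs every $2k$ time slices, and --- because $0$ and $k$ differ by exactly $k$ modulo $2k$ --- these two families interleave so that $PreserveAT(\textbf{\textit{at}})$ is executed at precisely one out of every $k$ consecutive time slices. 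Equivalently, the maintenance instants form an arithmetic progression $\dots, \tau, \tau+k, \tau+2k, \dots$ with common difference $k$.

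Next I would invoke Theorem \ref{thm-psvAT}: if $PreserveAT(\textbf{\textit{at}})$ is performed at the beginning of time slice $t'$, then $CheckAT(\textbf{\textit{at}}, k')$ returns the exact activity status of \textbf{\textit{at}} for every slice in $\{t', t'+1, \dots, t'+k-1\}$ and every $k' \le k$. Attaching the validity window $\{\tau+jk, \dots, \tau+(j+1)k-1\}$ to the maintenance instant $\tau+jk$, these windows are consecutive and pairwise disjoint, and their union is $\{\tau, \tau+1, \tau+2, \dots\}$. Therefore for every time slice $t \ge \tau$ there is a $j$ with $\tau+jk \le t \le \tau+(j+1)k-1$, so $CheckAT(\textbf{\textit{at}}, k')$ is accurate at the end of $t$.

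It remains to handle the slices preceding the first maintenance instant $\tau$. Here I would use the elementary fact that any $k$ consecutive residues modulo $2k$ contain exactly one element of $\{0, k\}$, so $\tau$ occurs within the first $k$ time slices of \textbf{\textit{at}}'s lifetime and fewer than $k$ slices have elapsed before $\tau$. In that regime, either \textbf{\textit{at}} has never been set, so $Value(\textbf{\textit{at}}) = 2k$ by $InitAT()$ and $CheckAT$ correctly reports ``inactive'', or \textbf{\textit{at}} was set at some slice $t_s$ with no intervening $PreserveAT()$, whence $Dist(\textbf{\textit{at}}) = t - t_s < k < 2k$ and line \ref{line-disCal-alg-checkAT_at_k} of Algorithm \ref{alg-checkAT_at_k} computes $Dist(\textbf{\textit{at}})$ exactly, making the subsequent comparison correct. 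Combining the two regimes gives correctness of $CheckAT(\textbf{\textit{at}}, k')$ at the end of every time slice.

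The part I expect to be the crux is the interleaving step: showing rigorously that the trigger set $\{0, k\}$ --- as opposed to a single value, which would give maintenance period $2k$ and leave gaps --- yields a maintenance period of exactly $k$, so that consecutive validity windows abut without overlap or gap, including at the very start of the AT's life. Once the arithmetic-progression structure of the maintenance instants is established, the corollary follows directly from Theorem \ref{thm-psvAT}.
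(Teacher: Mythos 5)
Your proposal is correct and follows essentially the same route as the paper's own proof: both rely on the fact that the trigger set $\{0,k\}$ makes the maintenance instants recur every $k$ time slices, so the length-$k$ validity windows guaranteed by Theorem \ref{thm-psvAT} tile the timeline with no gaps. The only difference is that you explicitly verify correctness on the initial segment before the first maintenance instant (where $Value(\textbf{\textit{at}})=2k$ or $Dist(\textbf{\textit{at}})<k$ makes $CheckAT$ trivially accurate), a case the paper's proof leaves implicit; this is a small but welcome tightening rather than a different argument.
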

\begin{proof}
Assuming $ACT_{t'}^{at}=0$, according to the formula \ref{eq-adjecentTS-ACT}, we can get $ACT_{t'+k}^{at}=k$ and $ACT_{t'+k+k}^{at}=0$. According to Theorem \ref{thm-psvAT}, if $PreserveAT(\textbf{\textit{at}})$ is executed at the beginning of time slice $t'$, $CheckAT(\textbf{\textit{at}}, k')$ can give the state of \textbf{\textit{at}} accurately between time slice $t'$ and time slice $t'+k-1$; if $PreserveAT(\textbf{\textit{at}})$ is executed at the beginning of time slice $t'+k$, then between time slice $t'+k$ to time slice $t'+k+k-1$, $CheckAT(\textbf{\textit{at}}, k')$ can give the state of \textbf{\textit{at}} accurately. Because $PreserveAT()$ is performed when $ACT^{at}_t$ is 0 or $k$, $CheckAT(\textbf{\textit{at}}, k')$ can accurately give the state of \textbf{\textit{at}} between time slice $t'$ and time slice $t'+k+k-1$. When the time slice $t'+k+k$ is entered, the $PreserveAT()$ operation will maintain the state of \textbf{\textit{at}} again. So at the end of any time slices, $CheckAT(\textbf{\textit{at}}, k')$ can accurately give whether \textbf{\textit{at}} is active.
\end{proof}
 
According to corollary 1, the state-maintenance operation of \textit{AT} can be simplified. When $PreserveAT()$ operation uses the algorithm \ref{alg-preserveAT_at} for \textit{AT} state-maintenance, it is necessary to calculate the distance of \textit{AT}, and then compare it with $k$. According to the formula \ref{eq-get_dist_at_from_act}, it requires one addition, one subtraction and one mod operation to calculate \textit{AT} distance. So the algorithm \ref{alg-preserveAT_at} needs four operations to determine whether the distance of \textit{AT} exceeds $k$: one addition, one subtraction, one mod and one comparison. When corollary 1 is used for state-maintenance, only two or three comparing operations are needed to determine whether the distance of \textit{AT} exceeds $k$. Because when $ACT_t^{at}$ is given, we can know which $Value(\textbf{\textit{at}})$ causes $Dist(\textbf{\textit{at}})$ to be greater than or equal to $k$.

According to the formula \ref{eq-get_dist_at_from_act}, at the beginning of time slice $t$, when $ACT_t^{at}=0$, if $0 \leq Value(\textbf{\textit{at}})\leq k$ and \textbf{\textit{at}} is not set in time slice $t$, then at the end of time slice $t$, $Dist(\textbf{\textit{at}})\geq k$; when $ACT_t^{at}=k$, if $k \leq Value(\textbf{\textit{at}})\leq 2*k-1$ or $Value(\textbf{\textit{at}})=0$, and \textbf{\textit{at}} is not set in time slice $t$, then $Dist(\textbf{\textit{at}})\geq k$ at the end of time slice $t$. So at the beginning of time slice $t$, if $ACT_t^{at}=0$ and $0 \leq Value(\textbf{\textit{at}})\leq k$, then \textbf{\textit{at}} becomes inactive; if $ACT_t^{at}=k$ and $k\leq Value(\textbf{\textit{at}}) \leq 2*k-1$ or $Value(\textbf{\textit{at}})=0$, then \textbf{\textit{at}} becomes inactive.

For example, when $k=9$, Figure \ref{fig_asynTS_updated_everyK} illustrates how to maintain the state of \textit{AT}. In Figure \ref{fig_asynTS_updated_everyK}, the value on the edge of the circle represents the possible values of $Value(at)$. In time slice $t$, if $ACT_t^{at}=17$, then $ACT_{t+1}^{at}=0$. According to corollary \ref{cly-preserveAT_at_act_0_k}, at the beginning of time slice $t+1$, the state of \textbf{\textit{at}} should be maintained. As can be seen from the upper part of Figure \ref{fig_asynTS_updated_everyK}, at the end of time slice $t$, if $Value(\textbf{\textit{at}})$ is between [10,17] and \textbf{\textit{at}} is not set in time slice $t+1$, then $Dist(\textbf{\textit{at}}) < 9$ at the end of time slice $t+1$, and in this case, the value of \textbf{\textit{at}} remains unchanged. But at the end of time slice $t$, if $Value(\textbf{\textit{at}})$ is between $[0,9]$ and \textbf{\textit{at}} is not set in time slice $t+1$, then $Dist(\textbf{\textit{at}}) \geq 9$ at the end of time slice $t+1$, and in this case, \textbf{\textit{at}} should be marked inactive. By analogy, when $ACT_{t+1}^{at}=9$, if $Value(\textbf{\textit{at}})$ is between $[9, 17]$ or $Value(\textbf{\textit{at}})$ is equal to 0, \textbf{\textit{at}} is marked inactive. A slight difference from the case when $ACT_{t+1}^{at}=0$ is that when $ACT_{t+1}^{at}=k$, one more comparing operation (compare whether $Value(\textbf{\textit{at}})$ is 0) is needed to determine whether $Dist(\textbf{\textit{at}})$ is less than $k$.
\begin{figure}[!ht]
\centering
\includegraphics[width=0.47\textwidth]{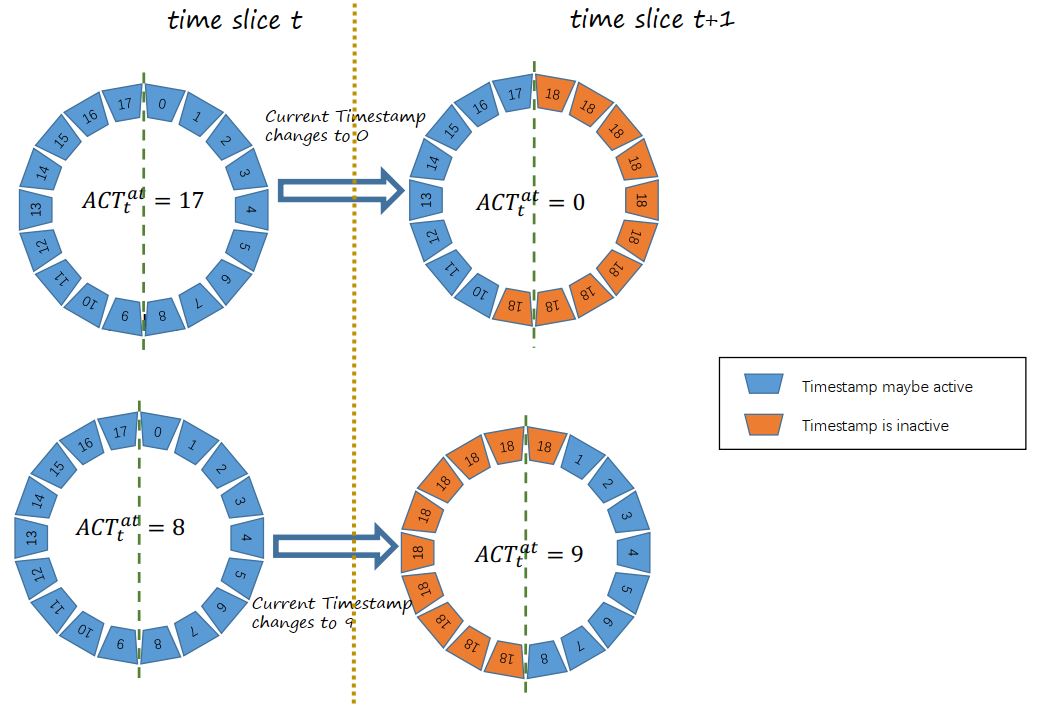}
\caption{Asynchronous timestamp}
\label{fig_asynTS_updated_everyK}
\end{figure}

$PreserveAT()$ operation can use the algorithm \ref{alg-preserveAT_at2} to maintain \textit{AT} state more quickly when following corollary \ref{cly-preserveAT_at_act_0_k}. The algorithm \ref{alg-preserveAT_at2} uses two or three comparisons to determine whether \textit{AT} needs to be marked inactive. The algorithm \ref{alg-preserveAT_at2} is faster than the algorithm \ref{alg-preserveAT_at} because it does not need addition and mod operations.
\begin{algorithm}                       
\caption{preserveAT}          
\label{alg-preserveAT_at2}                            
\begin{algorithmic}[1]                  
\Require {
 Asynchronous timestamp $\textbf{\textit{at}}$,
 Time slices number $k$} 
\State $act \Leftarrow $ the $ACT$ of $at$
\If{$act == 0$}
\If{$0 \leq Value(at)\leq k$}
\State $Value(at) \Leftarrow 2*k$
\EndIf
\EndIf
\If{$act == k$}
\If{$k \leq Value(at) \leq 2*k-1$}
\State $Value(at) \Leftarrow 2*k$
\EndIf
\If{$ Value(at) == 0$}
\State $Value(at) \Leftarrow 2*k$
\EndIf
\EndIf
\end{algorithmic}
\end{algorithm}

In the cardinality estimation algorithm, a large number of fixed counters are used. Because \textit{AT} preserves its state incrementally, when \textit{AT} is used as the counter of these algorithms, cardinality can be estimated under sliding time window continuously. If there are $2*k$ \textit{AT}s and their \textit{ACT}s are different from each other, only 2 \textit{AT}s need $PreserveAT()$ operation at the beginning of each time slice. Based on this idea, we propose a virtual asynchronous timestamp estimator \textit{VATE}. It estimates the cardinality under sliding time window based on a fixed length \textit{AT} pool.

\section{Cardinalities estimation under sliding time window}
To calculate the cardinality of all hosts in \textit{ANet}, we only need to extract IP address pairs, like $\{<aip, bip>|aip \in \textit{ANet}, bip \in BNet\}$, from network packets.

When \textit{ANet} is a core network, such as a city-wide network, there are millions of hosts in it. One way to accurately estimate the cardinality of each host is to save and maintain the state of each host and their opposite hosts. For example, an \textit{AT} is assigned to each opposite hosts of $aip$, and the number of active $AT$s ( acquired by $CheckAT()$ operation ) in each time window is counted to obtain the cardinality of $aip$ under the sliding time window. Although this method can get accurate results, it takes up a lot of memory and processing time for the core network. Inspired by GSE, a new algorithm, virtual asynchronous timestamp estimator (\textit{VATE}), is proposed to estimate the cardinality of each host.
\subsection{Virtual asynchronous timestamp estimator}
In cardinality estimation algorithms, each host usually uses the same number of counters. For hosts in \textit{ANet}, their cardinality ranges from 1 to tens of thousands. The larger the cardinality, the more counters the host needs to get an accurate estimation. However, allocating the same number of counters for low-cardinality hosts and high-cardinality hosts causes memory waste. If \textit{AT} can be shared by multiple hosts, the utilization of counters owned by low-cardinality hosts can be improved. \textit{VATE} is put forward under this idea. Let \textit{ATP} denote the \textit{AT} pool consisting of $2^c$ $AT$s and $ATP[i]$ denote the $i$-th \textit{AT} in \textit{ATP}. \textit{VATE} assigns $g$ virtual $AT$s to each host, and each virtual \textit{AT} corresponds to a physical \textit{AT} in the \textit{ATP}. \textit{VATE} is derived from GSE(in Figure \ref{fig_virtural_counter_vector_illustrate}). Unlike GSE, \textit{VATE} uses \textit{AT} instead of bits, as shown in Figure \ref{fig_VATE_AT_pool}.
\begin{figure}[!ht]
\centering
\includegraphics[width=0.47\textwidth]{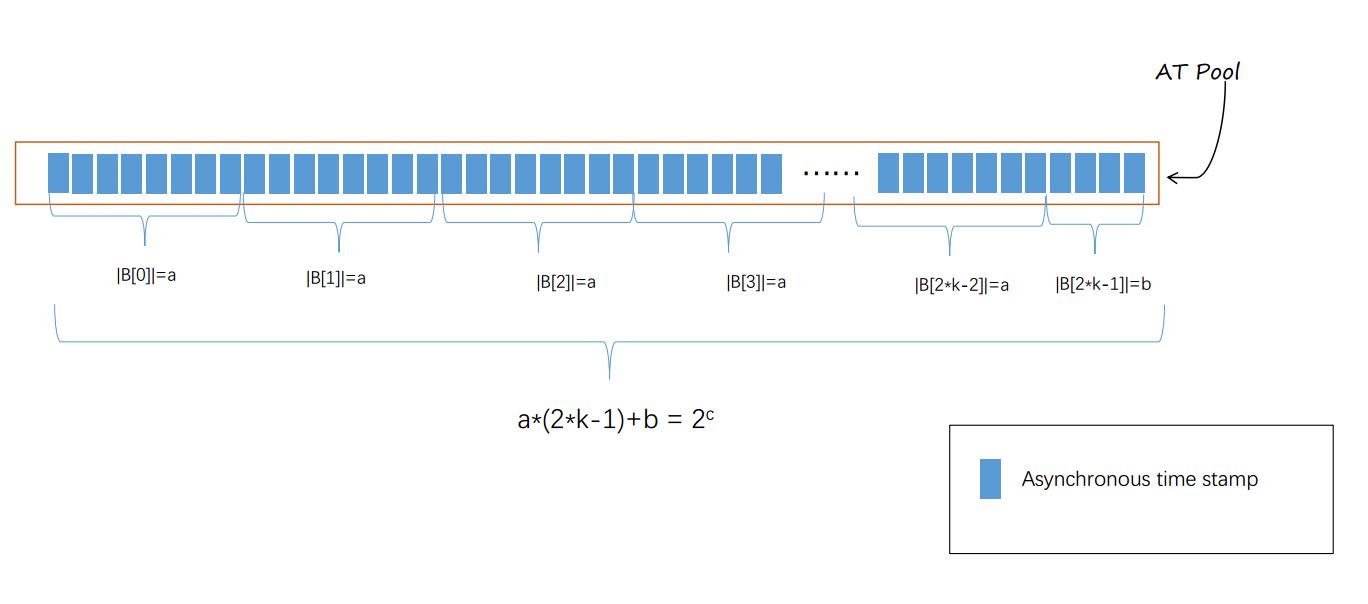}
\caption{Asynchronous time stamp pool}
\label{fig_VATE_AT_pool}
\end{figure}

The state of \textit{AT} in \textit{ATP} is maintained by corollary 1. If each \textit{AT} in \textit{ATP} has the same \textit{ACT}, there are $2^c$ \textit{AT}s needing state-maintenance in these time slices where \textit{ACT} equal to $k$ or 0. But no \textit{AT} needs state-maintenance in other time slices. At this time, the load of the system is unbalanced in different time slices. Moreover, when the value of c is big (for example, $c$ is larger than 27), the time of state-maintenance in these time slices, where the \textit{ACT} is 0 or $k$, will exceed the length of a time slice, and it causes that the algorithm can not run in real time. Therefore, it is necessary to reduce the peak time of state-maintenance operations in all time slices.

Because when to maintain the state of \textit{AT} is determined by its \textit{ACT}, and \textit{ACT} has $2*k$ different values (from 0 to $2*k-1$), \textit{ATP} can be divided according to \textit{ACT} values. In this paper, \textit{AT} in \textit{ATP} is divided into $2*k$ blocks, each \textit{AT} in a block has the same \textit{ACT}. The \textit{ACT} of a \textit{AT} block is defined as the \textit{ACT} of the \textit{AT} in it. Let $B[i]$ denote the $i$-th \textit{AT} block, $|B[i]|$ denote the number of \textit{AT}s in $B[i]$, and $BACT_t^i$ denote the \textit{ACT} of $B[i]$ in time slice $t$, $0\leq i\leq 2*k-1$. The \textit{ACT}s of any two \textit{AT} blocks are different, and the \textit{ACT}s of adjacent \textit{AT} blocks have the following relationship:
\begin{equation}
BACT_t^{i+1}=(BACT_t^i+1) mod (2*k)  
\label{eq_adj_atblock_act}
\end{equation}

In any time slice $t$, when the \textit{ACT} of any \textit{AT} block is known, the \textit{ACT} of other \textit{AT} blocks are calculated according to the formula \ref{eq_adj_atblock_act}. So we only need to record and maintain the \textit{ACT} of an \textit{AT} block. In this paper, we record the \textit{ACT} of the first \textit{AT} block in memory and update its value at the beginning of each time slice. The update method is as follows:
\begin{equation}
 \label{eq_adj_atblock0_act_update}
  BACT_{t+1}^0=(BACT_t^0+1)mod(2*k) 
\end{equation}

In any time slice $t$, $BACT_t^i$ is calculated according to the following formula, where $0 \leq i \leq 2*k-1 $:
\begin{equation}
BACT_t^i=(BACT_t^0+i)mod (2*k)
\label{eq_adj_Get_atblock_act_from_atb0}
\end{equation}

According to the formulas \ref{eq_adj_atblock_act} and \ref{eq_adj_Get_atblock_act_from_atb0}, only two \textit{AT} blocks have \textit{ACT} of 0 or $k$ in any time slice. So only two \textit{AT} blocks in any time slice need the state maintance. To maintain the same number of $AT$s in each time slice, the number of $AT$s contained in each \textit{AT} block should be as same as possible. Because $2^c$ is not necessarily a multiple of $2*k$, when we divide the $2^c$ $AT$s into $2*k$ groups, the number of $AT$s in each group may not be the same. A simple method is to make the number of $AT$s in the previous $2*k-1$ \textit{AT} blocks the same, leaving all the differences in the last \textit{AT} block, as shown in Figure \ref{fig_VATE_AT_pool}. Let $a=|B[i]|$ ($0 \leq i \leq 2*k-2$) and $b=|B[2*k-1]|$, then $a$ and $b$ are calculated according to the following formula.
\begin{equation}
a=\frac{2^c}{2*k-1}
\label{eq-atblk_N_a}
\end{equation}
\begin{equation}
b=2^c\ mod\ (2*k-1) 
\label{eq-atblk_N_b}
\end{equation}

According to Figure \ref{fig_VATE_AT_pool}, the $i$-th \textit{AT} in \textit{ATP} is assigned to $B[i/a]$. This method quickly calculates that an \textit{AT} belongs to which block, but the difference between $|B[2*k-1]|$ and $|B[0]|$ may be more than 1.

Another method is to distribute the differences evenly among the last several \textit{AT} blocks, as shown in Figure \ref{fig_VATE_AT_pool_v2}. Let $a'= 2^c /(2*k)$, $b'= 2^c mod (2*k)$. In the \textit{AT} block partition method shown in Figure \ref{fig_VATE_AT_pool_v2}, for the former $2*k-b'$ \textit{AT} blocks, each contains $a'$ \textit{AT}s; for the latter $b'$ \textit{AT} blocks, each contains $a'+1$ \textit{AT}s. At this time, the difference of the number of $AT$s in any two \textit{AT} blocks will not exceed 1. But when using this partition method, mapping \textit{AT} in \textit{ATP} to \textit{AT} block is more complicated. For the $i$-th \textit{AT} in \textit{ATP}, if $i< a'*(2*k-b'+1)$, $ATP[i]$ belongs to $B[i/a']$; if $i\geq a'*(2*k-b'+1)$, $ATP[i]$ belongs to $B[\frac{i+2*k-b'}{a'+1}]$. 

The \textit{AT} block partition methods shown in figures \ref{fig_VATE_AT_pool} and \ref{fig_VATE_AT_pool_v2} have their own advantages. Considering the simplicity of calculation, the \textit{AT} block partition method shown in Figure \ref{fig_VATE_AT_pool} will be adopted in the following part.
\begin{figure}[!ht]
\centering
\includegraphics[width=0.47\textwidth]{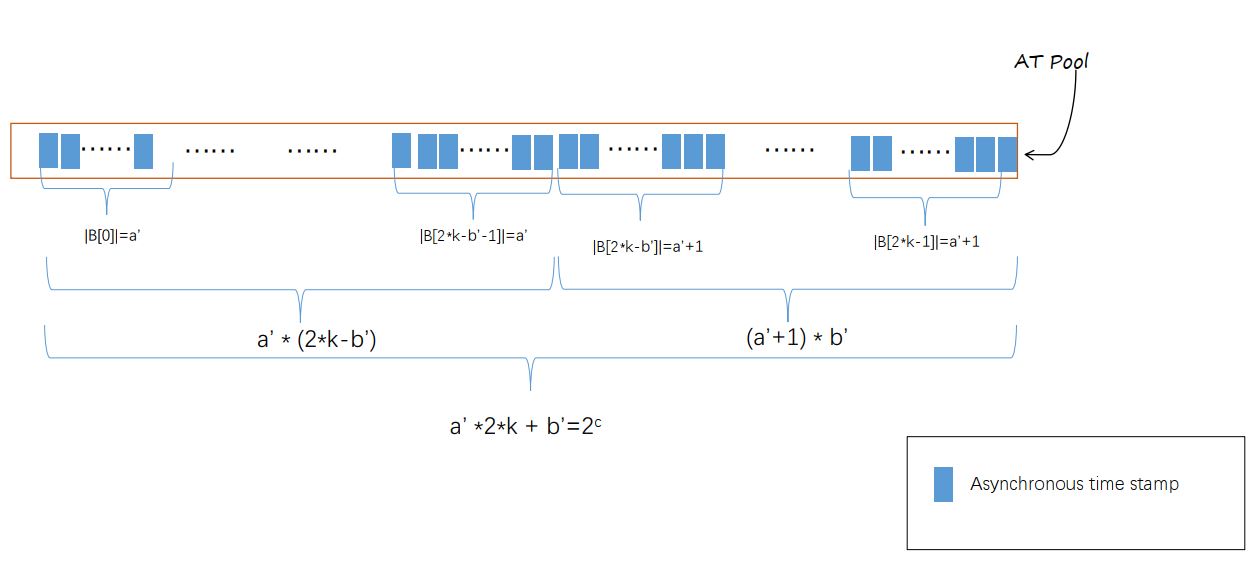}
\caption{Low deviation partition method}
\label{fig_VATE_AT_pool_v2}
\end{figure}

For each $aip$ in \textit{ANet}, \textit{VATE} associates it with $g$ $AT$s in \textit{ATP}. For $aip$, each of its \textit{AT} is called virtual asynchronous timestamp(\textit{VAT}). Let $VAT_{aip}^i$ denote the $i$-th \textit{VAT} of $aip$. $VAT_{aip}^i$ is mapped to a physical \textit{AT} in \textit{ATP} by hash function $H(aip, i)$, that is, $VAT_{aip}^i=ATP[H(aip,i)]$, where $H(aip, i)$ belongs to $[0, 2^c-1$] and $i$ belongs to $[0, g-1]$. The state of \textit{VAT} is the state of its corresponding physical \textit{AT} in \textit{ATP}. For example, if $ATP[H(aip,i)]$ is active, then $VAT_{aip}^i$ is active, otherwise, $VAT_{aip}^i$ is inactive. Each opposite host $bip$ of $aip$ appearing in time slice $t$ will select and set up a \textit{VAT} of $aip$. Let the hash function $BH(bip)$ map $bip$ randomly to an integer between 0 and $g-1$. $BH(bip)$ is used to determine which \textit{VAT} is choosen by $bip$, that is, $ VAT_{aip}^{BH(bip)}$. Setting \textit{VAT} is ultimately the setting of physical \textit{AT} in \textit{ATP}, that is, when $bip$ appears, $SetAT(ATP[H(aip,BH(bip))])$ operation is performed.

According to the attributes of \textit{AT}, at the end of time slice $t$, \textit{VATE} can not only know which $AT$s have been set in time slice $t$, but also determine which $AT$s have been set in the previous $k'-1$ time slices. In another word, at the end of time slice $t$, \textit{VATE} can report which $AT$s are active in time window $W(t-k'+1, k')$, where $k'$ is a positive integer less than or equal to $k$. For \textit{ATP}, the active \textit{AT} can be regarded as the bit ``1" in the algorithm GSE, and the inactive \textit{AT} corresponds to the bit ``0" in the GSE. According to the GSE algorithm, we can get the formula \ref{eq_vate_cardinalityEst} for estimating the cardinality of $aip$ in time window $W(t-k'+1,k')$ at the end of time slice $t$. In time window $W(t-k'+1, k')$, $Z_p^{k'}$ represents the fraction of inactive \textit{AT} in \textit{ATP} and $Z_{aip}^{k'}$ represents the fraction of inactive \textit{VAT} in all \textit{VAT} of $aip$.
\begin{equation}
\label{eq_vate_cardinalityEst}
|OP(aip,t-k'+1,k')|=g*ln(Z_{aip}^{k'})-g*ln(Z_p^{k'})
\end{equation}

Let IPair(t) represent all IP address pairs in time slice $t$. By scanning the IP address pairs in each time slice, \textit{VATE} calculates the host cardinality under the sliding time window. The algorithm \ref{alg-vate_all} fully describes how \textit{VATE} estimates the cardinality of each host in \textit{ANet} under the sliding time window.

\begin{algorithm}                       
\caption{\textit{VATE}}    
\label{alg-vate_all}                 
\begin{algorithmic}[1]
\Require $c$,$k$,$k'$
\State $a \leftarrow \frac{2^c}{k}$ //there are $2^c$ \textit{AT} in \textit{ATP}, and there are most $k$ time slices in a time window.
\For {$at \in ATP$}  \label{alg_vateall_initATP_start}
\State $InitAT(at)$
\EndFor    \label{alg_vateall_initATP_end}
\State $BACT^0 \leftarrow 0$
\State $t \leftarrow 0$ 
\While{(1)}
  \For {$<aip,bip> \in IPair(t)$} \label{alg_vateall_scanIPair_start}
     \State $vid \leftarrow BH(bip)$ \label{alg_vateall_scanIPair_oneThread_start}
     \State $pid \leftarrow H(aip, vid)$
     \State $bid \leftarrow \frac{pid}{a}$
     \State $ACT^{ATP[pid]}_t \leftarrow (BACT^0 + bid) mod (2*k)$
     \State $SetAT(ATP[pid])$  \label{alg_vateall_scanIPair_oneThread_end}
   \EndFor  \label{alg_vateall_scanIPair_end}
   \For {$aip \in ANet$}  \label{alg_vateall_estCardinality_start}   
   \State $Cardinality \leftarrow g*ln(Z^{k'}_v)-g*ln(Z^{k'}_p)$\label{alg_vateall_estCardinality_oneThread_start} 
   \State Output $aip$ and its $Cardinality$\label{alg_vateall_estCardinality_oneThread_end} 
   \EndFor  \label{alg_vateall_estCardinality_end}
   \State $BACT^0 \leftarrow (BACT^0 +1 ) mod (2*k)$ \label{alg_vateall_updateBACT0}
   \State $t \leftarrow t+1$ \label{alg_vateall_timeslice_sliding}
   
   \For { $ i \in [0, 2*k-1]$}\label{alg_vateall_preserveATB_start}
     \State $BACT_t^i \leftarrow (BACT^0 + i) mod (2*k)$ 
     \If {$(BACT_t^i == 0)$ or $(BACT_t^i == k)$}
       \For {$at \in B[i]$}
       \State $SetAT(at)$
       \EndFor
     \EndIf  
   \EndFor\label{alg_vateall_preserveATB_end}
\EndWhile
\end{algorithmic}
\end{algorithm}

\textit{VATE} initialises all $AT$s in \textit{ATP} first. Then start scanning IP address pairs in each time slice. For each IP address pair in time slice $t$, lines \ref{alg_vateall_initATP_start} to \ref{alg_vateall_initATP_end} of the algorithm \ref{alg-vate_all} map it to an \textit{AT} in \textit{ATP} and set the \textit{AT}. After scanning all IP address pairs in IPair(t), \textit{VATE} begins to calculate and output the cardinality of each $aip$ in \textit{ANet} in the time window $W(t-k'+1, k')$ (from line \ref{alg_vateall_estCardinality_start} to \ref{alg_vateall_estCardinality_end}). After calculating the cardinality of all $aip$ in \textit{ANet}, \textit{VATE} updates the \textit{ACT} of the first \textit{AT} block in line \ref{alg_vateall_updateBACT0}, and then slides the window by a time slice. After the window sliding, \textit{VATE} maintains all $AT$s in the \textit{AT} block whose \textit{BACT} is 0 or $k$, as shown from line \ref {alg_vateall_preserveATB_start} to line \ref {alg_vateall_preserveATB_end} in algorithm \ref{alg-vate_all}.

\textit{VATE} can run continuously with only one initialisation. And \textit{VATE} is suitable for running on a parallel computing platform to process high-speed network data in real time. The next section describes how to deploy \textit{VATE} on the GPU.

\subsection{Deploy \textit{VATE} on GPU}
In high-speed networks, such as 40 Gb/s, millions of packets pass through the edges of the network every second. Scanning so many packets in real time requires a lot of computing resources. CPU is one of the most common computing components. Each core of CPU can handle complex tasks running different instructions. Although the computing core of CPU is powerful, its price is very high. If we want to use hundreds of CPU computing cores to process high-speed network data, we must adopt a cluster composed of multiple CPUs. The cost of the cluster will increase with the increase of scale. Graphics Processing Unit (GPU) is one of the most popular parallel computing platforms in recent years. A GPU chip contains hundreds to thousands of processing units, far more than that in the CPU. For tasks without data access conflicts and using the same instructions to process different data (single instruction multiple data streams, SIMD), GPU can achieve high speedup\cite{PD2013:BenchmarkingOfCommunicationTechniquesForGPUs}\cite{PD2013:GeneratingDataTransfersForDistributedGPUParallelPrograms}.

As can be seen from the algorithm \ref{alg-vate_all}, \textit{VATE} has three main steps: scanning IP address pairs, estimating cardinality and maintaining state. Each step uses the same steps to process different data (IP address pairs, host in \textit{ANet}, \textit{AT} in \textit{ATP}), and there is no read-write conflict between different data processing. So \textit{VATE} is a SIMD program. When \textit{VATE} runs on GPU, $s$ threads can be used to process $s$ data at the same time. The value of $s$ is usually more than one thousand. Figure \ref{fig_VATE_RunOnGPU} illustrates how \textit{VATE} runs on GPU.

\begin{figure}[!ht]
\centering
\includegraphics[width=0.47\textwidth]{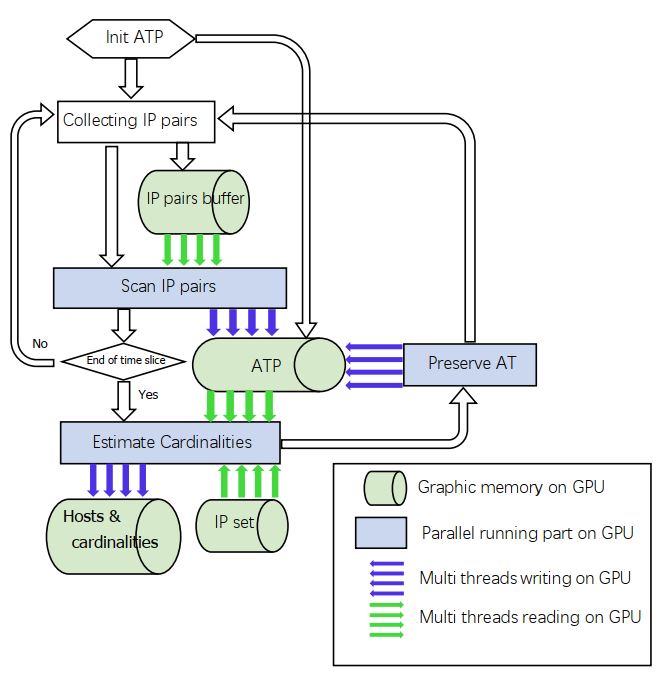}
\caption{\textit{VATE} running on GPU}
\label{fig_VATE_RunOnGPU}
\end{figure}

\textit{VATE} first allocates $2^c*ceil(log_2(2*k+1))$ bits on GPU graphical memory to store \textit{ATP} and initialise each \textit{AT} in the \textit{ATP}. Then \textit{VATE} starts processing IP address pairs in each time slice. GPU is connected to the server through PIC bus. Since threads in the GPU can only access the graphics memory of the GPU, IP address pairs must be copied from the server to the graphics memory of the GPU. Network packets come in turn. It is not efficient to extract and copy IP address pair one by one, because each data transmission session between server memory and GPU memory requires additional starting and ending operations. To improve efficiency, we allocate two buffers to store $r$ IP address pairs on the server side and GPU side respectively. For IPv4 addresses, an IP address pair takes 8 bytes, so the size of each IP address pair buffer is $r*8$ bytes. On the server, after extracting $r$ IP address pairs, \textit{VATE} copies them to the buffer on the GPU. At the end of a time slice, if the IP address pairs buffer on the server is not full, it will be filled with the last IP address pair in the buffer, and then copied to the GPU. The larger the $r$, the more efficient the copying process is. However, the delay of scanning IP address pairs will also increase, because the scanning process of IP address pairs will not run until the IP address pairs on the GPU receive $r$ IP address pairs on the buffer. And under the sliding time window, IP address pairs are divided into time slices. If $r$ is greater than the number of IP address pairs in a time slice, the extra part will cause memory waste. The value of $r$ in this article is set to $2^{15}$.

When the IP address pair buffer on the GPU receives $r$ IP address pairs, \textit{VATE} enters the ``Scanning IP address pairs" process. This process starts $r$ GPU threads, each GPU thread reads an IP address pair in the buffer, and then processes the IP address pair according to the steps from line \ref{alg_vateall_scanIPair_oneThread_start} to \ref{alg_vateall_scanIPair_oneThread_end} of algorithm \ref{alg-vate_all}.

At the end of a time slice, the ``estimating cardinality" process begins. In this process, \textit{VATE} first divides the IP addresses in \textit{ANet} into groups and each group contains $s$ IP addresses. The number of IP addresses in the last group may be less than $s$, and the insufficient part can be filled with 0. Then \textit{VATE} calculates the cardinality of IP addresses in \textit{ANet} group by group on the GPU. For each IP group, \textit{VATE} starts $s$ threads on the GPU. Each thread calculates the cardinality of an IP address by the steps from line \ref{alg_vateall_estCardinality_oneThread_start} to \ref{alg_vateall_estCardinality_oneThread_end} of algorithm \ref{alg-vate_all}.

After finishing “estimating cardinality” process, \textit{VATE} slides the time window, updates $BACT^0$, and then begins to maintain the state of \textit{AT}s in two blocks. \textit{VATE} divides \textit{AT} in these two \textit{AT} blocks into groups, and each group contains $s$ \textit{AT}(the number of \textit{AT} in the last group may smaller than $s$, and the deficiencies will not be dealt with). Then \textit{VATE} maintains the state of \textit{AT} group by group on GPU. For every group, \textit{VATE} launches $s$ threads, and each thread preserves the state of an \textit{AT}(apply $PreserveAT()$ operation).

After state-maintenance, \textit{VATE} starts to scan IP address pairs in the next time slice. In this way, \textit{VATE} continuously estimates the cardinality of each IP address under the sliding time window on GPU in real time.

As can be seen from the Figure \ref{fig_VATE_RunOnGPU}, the three main processes of \textit{VATE} are executed in sequence. Each process handles multiple data concurrently on the GPU. And each process does not read and write the same memory area at the same time. So \textit{VATE} does not have read-write conflicts when running in parallel. Only in the ``scanning IP address pairs" process, multiple threads may write an \textit{AT} ($SetAT()$, set the \textit{AT} value to 0) at the same time. However, various threads set the same \textit{AT}, regardless of the order in which these threads write the \textit{AT}, the result is the same (\textit{AT} value is set to 0). So \textit{VATE} does not have write conflicts when running in parallel. 

When running on GPU, even if \textit{ATP} contains up to $2^{32}$ $AT$s, \textit{VATE} can maintain the state of each \textit{AT} in real time and estimate the cardinality of each host under the sliding time window. In the next section, we will demonstrate the advantages of the \textit{VATE} algorithm in memory usage and state-maintenance through experiments on real high-speed network traffic.

\section{Experiments}
In order to evaluate the performance of \textit{VATE}, we downloaded traffic data for 1 hour from a 10 Gb/s core network (Caida\cite{expdata:Caida}) and a 40 Gb/s core network (IPtas\cite{expdata:IPtraceCernetJS}) respectively. Caida's traffic was the network packets between Seattle and Chicago from 13:00 to 14:00 on February 19, 2015. IPtas traffic was the network packets between CERNET Jiangsu node and other networks from 13:00 to 14:00 on October 23, 2017. In these experiments, for Caida traffic, the hosts in Seattle composes \textit{ANet}, the hosts in Chicago composes \textit{BNet}; for IPtas traffic, hosts in CERNET Jiangsu node make up \textit{ANet}, hosts in other networks make up \textit{BNet}.

Table \ref{fig_exp_traffic_static} lists the main characteristics of these two sets of traffics in a five-minute time window. In table \ref{fig_exp_traffic_static}, ``$\#AIP$" denotes the number of hosts in \textit{ANet}, ``$\#BIP$" denotes the number of hosts in \textit{BNet}, and ``$\#Flow$" denotes the number of different IP address pairs. ``Speed(kpps)" denotes the speed of packets with unit one thousand packets per second. ``$\#IPc100$" denotes the number of hosts whose cardinality is greater than 100, and ``$\%IPc100$" denotes the percentage of hosts whose cardinality is greater than 100.
\begin{figure}[!ht]
\centering
\includegraphics[width=0.47\textwidth]{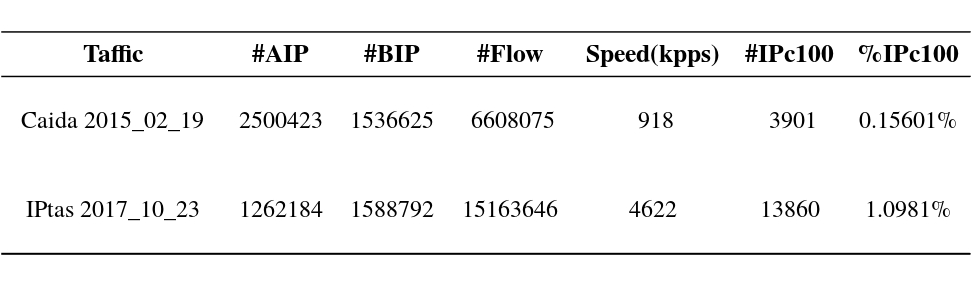}
\caption{Traffic summary}
\label{fig_exp_traffic_static}
\end{figure}

As can be seen from table \ref{fig_exp_traffic_static}, hosts with cardinalities greater than 100 account for only a small part of all hosts. In practical applications, researchers and managers are interested in those hosts with high cardinalities. Moreover, cardinality estimation for a large number of small-cardinality hosts will take too much computing time and reduce the efficiency of an algorithm. So this experiment only focuses on those hosts whose cardinalities are greater than or equal to 100.

In our experiment, we set the size of a time slice to 1 second, and the first time slice lasting from 13:00:00 to 13:00:01 is regarded as time slice 0. $k'$ is set to 300, that is, to calculate the cardinality in 5 minutes. In a five-minute window, the cardinality of most hosts is less than 5000. For example, Figure \ref{fig_exp_CardinalityDistribution_s600_v2} shows the distribution of cardinalities of these two traffic in $W(600,300)$.
\begin{figure}[!ht]
\centering
\includegraphics[width=0.47\textwidth]{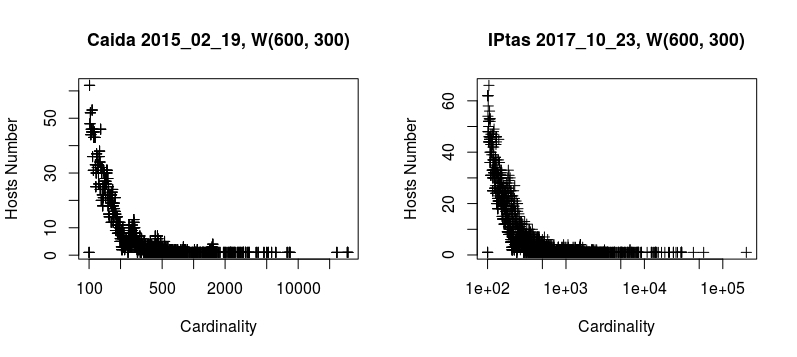}
\caption{Cardinality distribution in a time window}
\label{fig_exp_CardinalityDistribution_s600_v2}
\end{figure}

In the time window $W(600,300)$, there are only 10 and 72 hosts respectively whose cardinalities are greater than or equal to 5000 in $Caida\ 2015\_02\_19$ and $IPtas\ 2017\_10\_23$. Cardinalities of most hosts($99.733\%$ in $Caida\ 2015\_02\_19$ and $98.816\%$ in $IPtas\ 2017\_10\_23$) are less than 5000.\textit{VATE} is based on GSE algorithm. According to GSE, the larger the number of virtual counters $g$, the larger cardinality that can be estimated by \textit{VATE}. But the larger the $g$, the more time it takes to estimate the cardinality because \textit{VATE} needs to check the states of more virtual counters. To compare the effects of the number of physical counters (for \textit{VATE}, the number of counters is the number of $AT$s in \textit{ATP}, i.e. $2^c$) and the number of virtual counters($g$) on cardinality estimation, we tested estimation accuracy of \textit{VATE} under different parameters. Figure \ref{fig_exp_ATV_est_real_diffCN_w600_data1_v2} and \ref{fig_exp_ATV_est_real_diffCN_w600_data10_v2} show the cardinalities of $Caida\ 2015\_02\_19$ and $IPtas\ 2017\_10\_23$ estimated by \textit{VATE} under time window W(600,300). In these two graphs, the abscissa represents the actual cardinality, and the ordinate is the cardinality estimated by \textit{VATE}. When the points in the figure are closer to the oblique line(the oblique line composed of the points whose estimated cardinality equals the actual cardinality), the accuracy of the estimated result is higher. When the point in the graph is below the right of the oblique line, it means that the cardinality is underestimated; when the point is above the left of the oblique line, it means that the cardinality is overestimated.
\begin{figure*}[!ht]
\centering
\includegraphics[width=0.97\textwidth]{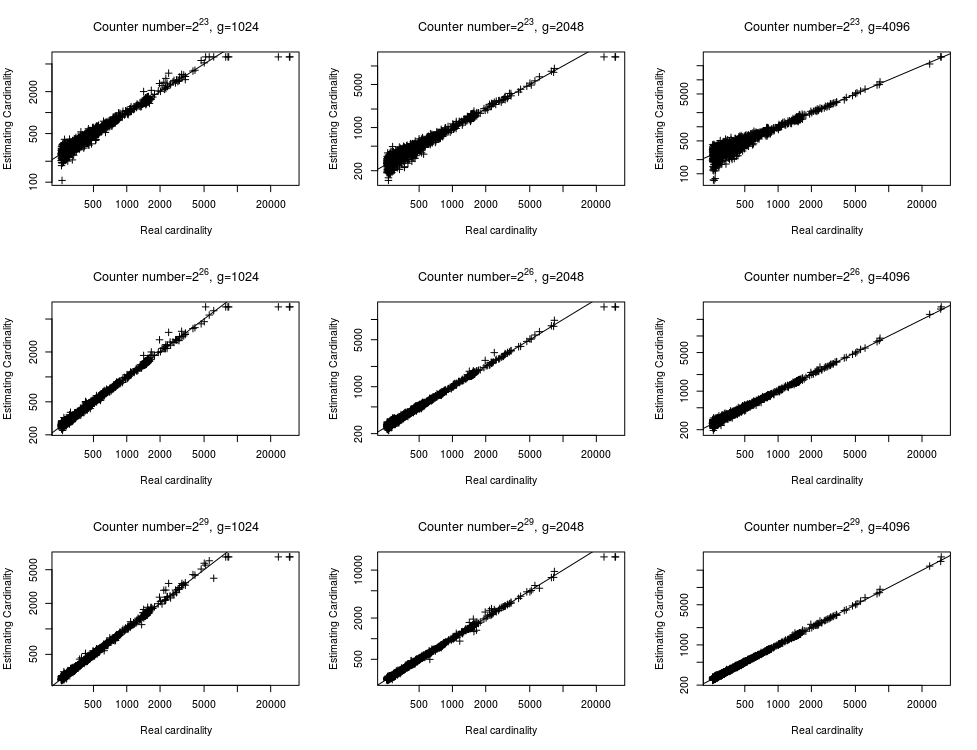}
\caption{Cardinality estimation of Caida $2015\_02\_19$}
\label{fig_exp_ATV_est_real_diffCN_w600_data1_v2}
\end{figure*}

\begin{figure*}[!ht]
\centering
\includegraphics[width=0.97\textwidth]{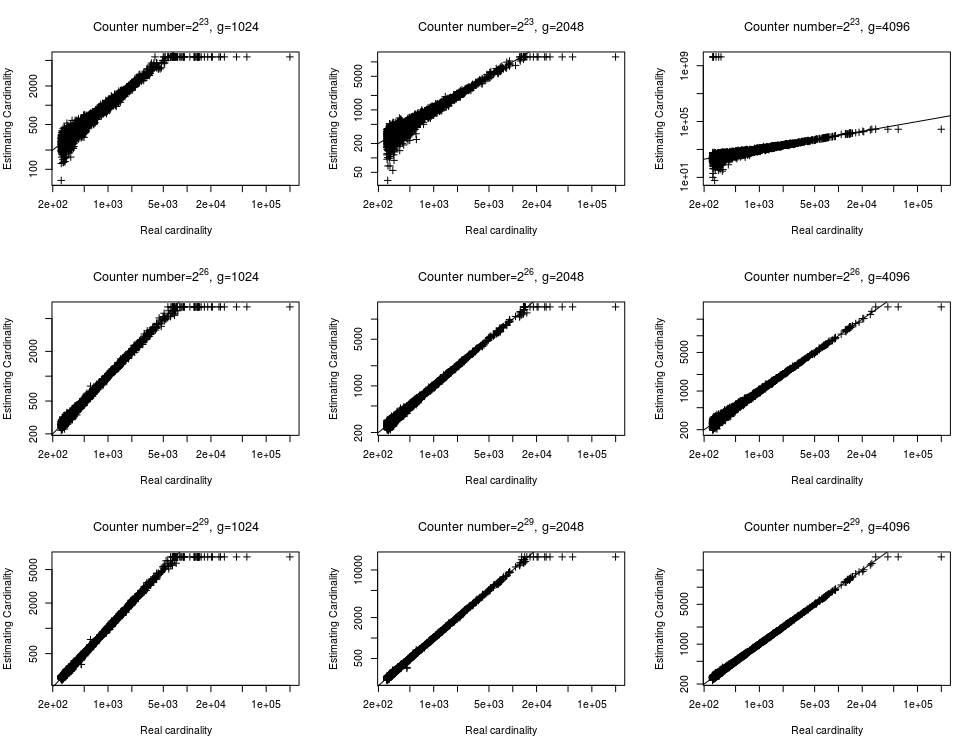}
\caption{Cardinality estimation of IPtas $2017\_10\_23$}
\label{fig_exp_ATV_est_real_diffCN_w600_data10_v2}
\end{figure*}

As can be seen from Figure \ref{fig_exp_ATV_est_real_diffCN_w600_data1_v2} and \ref{fig_exp_ATV_est_real_diffCN_w600_data10_v2}, when the number of counters remains unchanged ($2^{23}$, $2^{26}$ or $2^{29}$) and $g$ increases from 1024 to 4096, the number of hosts whose cardinality cannot be approximated by \textit{VATE} is also decreases. When $g = 4096$, for $Caida\ 2015\_02\_19$, the cardinality of all hosts in the time window W(600,300) can be approximately estimated. When the value of $g$ is constant, the larger the number of counters, the closer the points are to the oblique line, that is, the closer the estimated value is to the actual cardinality. Therefore, to improve the efficiency of \textit{VATE}, $g$ should be determined according to these large cardinalities in the window; in order to improve the accuracy of the algorithm, more counters should be stored in the counter pool. But the more counters there are, the more memory the counter pool occupies and the more time it takes to maintain the state of the counter.

\textit{DR} is a counter under the sliding time window with optimal memory and small state-maintenance operation. Similar to \textit{VATE}, when the counter in the counter pool is \textit{DR}, the \textit{VDRE} algorithm can be obtained (the setting of \textit{AT}, state-maintenance and state checking become the corresponding operation of \textit{DR}). Both \textit{VATE} and \textit{VDRE} are based on GSE. When the number of counters in the counter pool is the same, and the mapping method of IP address to virtual counters and virtual counters to physical counters in the counter pool are the same, \textit{VATE} and \textit{VDRE} will get the same estimating value. The difference lies in memory usage and state-maintaining time. When $k'$ is set to 300, each \textit{DR} takes 9 bits ($ceil(log_2(k'+1)$) and each \textit{AT} takes 10 bits ($ceil(log_2(2*k'+1)$). Although \textit{AT} contains one bit more than \textit{DR}, \textit{AT} has less state-maintaining time. The running time of \textit{VATE} and \textit{VDRE} can be divided into three parts: packet scanning time (ST), cardinality estimation time (ET) and counter state-maintaining time (PT). Different running platforms influence these three kinds of running time. We use three different Nvidia GPUs to test the running time of these algorithms: GTX650 with 1 GB graphic memory, GTX950 with 4 GB graphic memory and TitanXp with 12 GB graphic memory. The computing power and graphics memory of these three GPUs are improved in sequence. The three GPUs are represented by GTX650-1GB, GTX950-4GB and TitanXP-12GB, respectively. Tables \ref{fig_exp_allConsumingTime_onDifferentGPU_100131_v1} and \ref{fig_exp_allConsumingTime_onDifferentGPU_100132_v1} show the average running time(avgST, avgET and avgPT) of \textit{VDRE} and \textit{VATE} when estimating the cardinalities of these hosts in $Caida\ 2015\_02\_19$ on the three GPU platforms. Among them, $g$ is set to 1024, and the number of counters increases from $2^{23}$ to $2^{32}$. “$\#Cnt.(lg2)$” is the exponent of the number of counters when the base number is 2.
\begin{table}
\centering
\caption{Time consumption of \textit{VDRE} on different GPU}
\label{fig_exp_allConsumingTime_onDifferentGPU_100131_v1}
\begin{tabular}{c}                                                                                                                                                                                                                           
\centering
\includegraphics[width=0.45\textwidth]{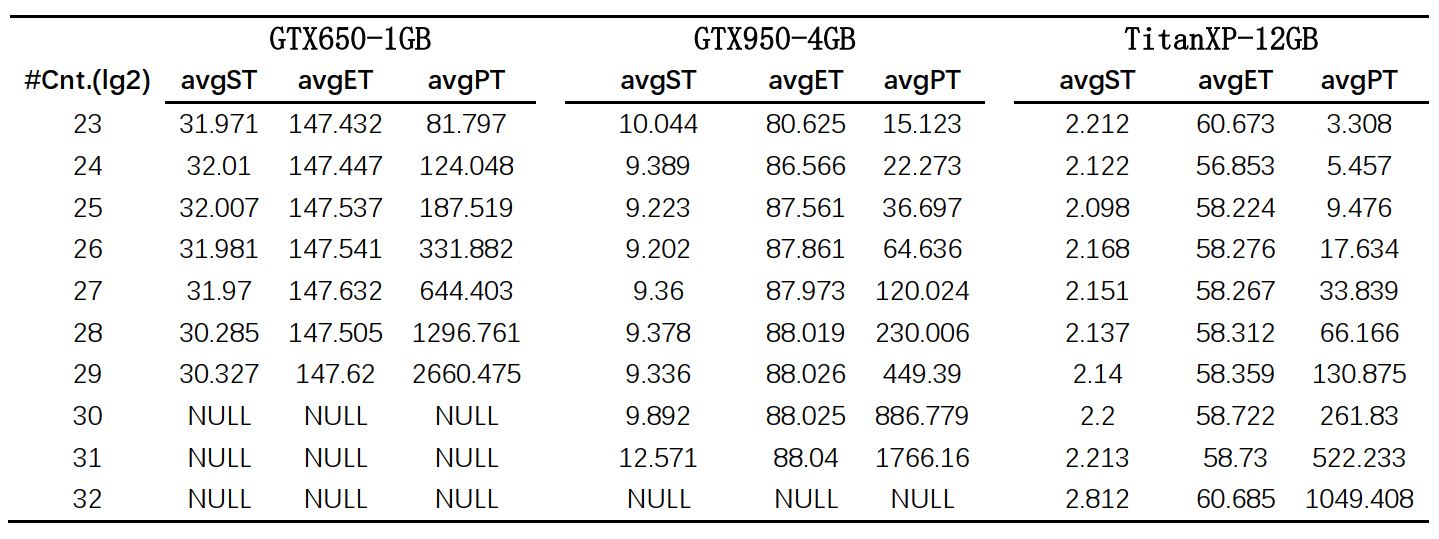}
\end{tabular}
\end{table}

\begin{table}
\centering
\caption{Time consumption of \textit{VATE} on different GPU}
\label{fig_exp_allConsumingTime_onDifferentGPU_100132_v1}
\begin{tabular}{c}                                                                                                                                                                                                                           
\centering
\includegraphics[width=0.45\textwidth]{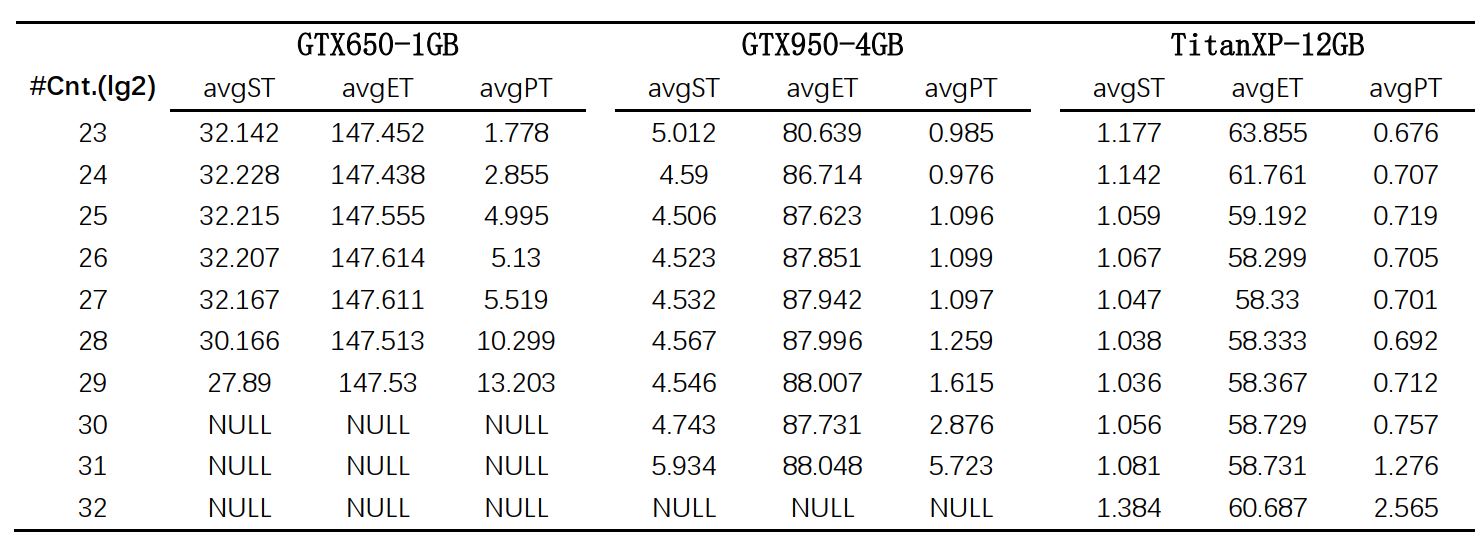}
\end{tabular}
\end{table}

In tables \ref{fig_exp_allConsumingTime_onDifferentGPU_100131_v1} and \ref{fig_exp_allConsumingTime_onDifferentGPU_100132_v1}, NULL indicates that the counter pool is larger than the GPU memory and cannot run on the GPU. As can be seen from these two tables, the three running times of \textit{VDRE} and \textit{VATE} are greatly reduced with the improvement of GPU computing power. According to algorithm \ref{alg-vate_all}, ST and ET are not affected by the number of counters. So on the same GPU, no matter how many counters are contained in the counter pool, there is little change of ST or ET. The difference between \textit{VDRE} and \textit{VATE} is reflected in PT. PT increases with the number of counters in the counter pool, as shown in Figure \ref{fig_exp_preserveTime_cmp_list_diffCN_diffGPU_linePoints}. For \textit{VDRE}, this upward trend is more obvious.
\begin{figure*}[!ht]
\centering
\includegraphics[width=0.97\textwidth]{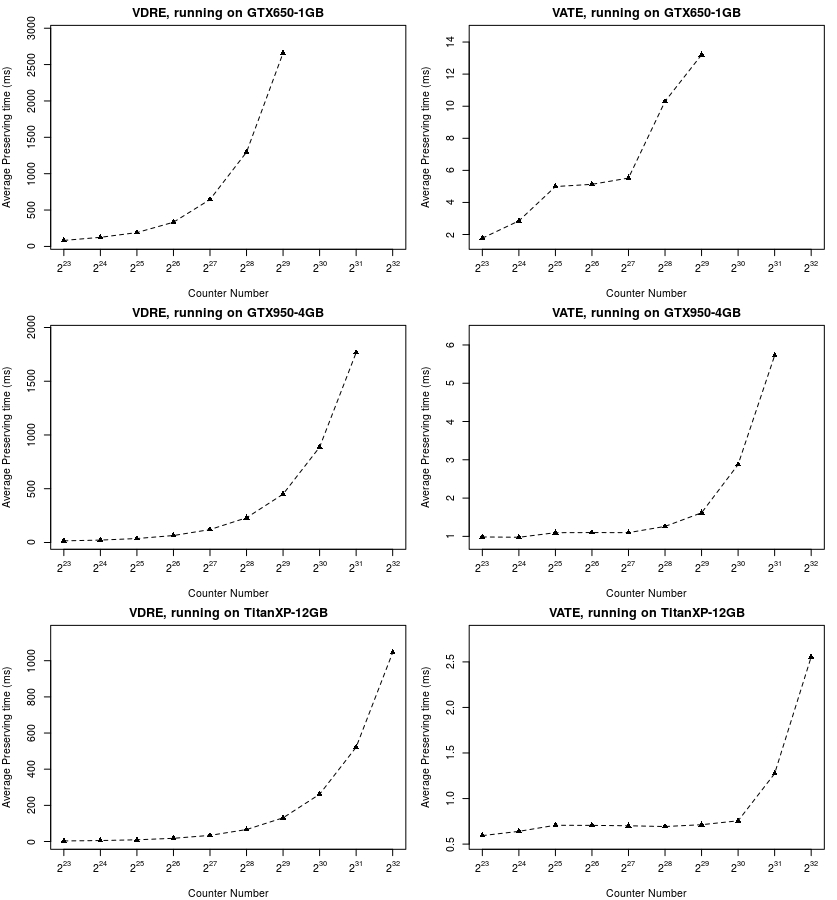}
\caption{Preserving time comparing on different GPU}
\label{fig_exp_preserveTime_cmp_list_diffCN_diffGPU_linePoints}
\end{figure*}

On the same GPU, when the counter pool contains the same number of counters, \textit{VATE}'s PT is significantly less than \textit{VDRE}'s (\textit{VATE}'s PT accounts for only $25\%$ to $0.25\%$ of \textit{VDRE}'s PT). On GTX650-1GB, when the number of counters is $2^{28}$, the PT of \textit{VDRE} is as high as 1296 milliseconds, and the sum of three running times is 1447 milliseconds. For the algorithm under the sliding time window, the total running time in each time slice should not exceed the length of a time slice. Otherwise the algorithm can not run in real time. In this experiment, the length of the time slice is 1 second, while when the number of counters is $2^{28}$ on GTX650-1GB, the running time of processing data in each time slice is 1.4 times of the length of time slice. TitanXP-12GB has more computing power than GTX650-1GB, which can effectively reduce the running time. But the price of TitanXP-12GB is also higher than that of GTX650-1GB, and when the number of counters reaches $2^{32}$, even on TitanXP-12GB, \textit{VDRE} cannot run in real time, because its PT exceeds one second. So it is not the GPU memory that limits the accuracy of \textit{VDRE}, but the state-maintaining time of \textit{VDRE}. \textit{VATE} solves this problem very well. As can be seen from the table, the PT of \textit{VATE} does not exceed 14 milliseconds on any GPU.

On different GPUs, when the number of counters takes different values, the difference between \textit{VATE}'s PT and \textit{VDRE}'s PT is also changed. To measure this difference, we define the PT speedup as the ratio of  \textit{VDRE}'s PT to \textit{VATE}'s PT on the same GPU with the same number of counters. Table \ref{fig_exp_preserveTime_speed_up_v1} shows the change of PT speedup with different counter number on different platforms. As can be seen from the table, \textit{VATE} can achieve better PT speedup for low-computing capacity GPUs. On the same GPU, the greater the number of counters, the greater the PT speedup. On TitanXP-12GB with $2^{32}$ counters, \textit{VATE} even acquires a PT speedup as high as 409.
\begin{table}
\centering
\caption{Preserving time speed up of \textit{VATE}}
\label{fig_exp_preserveTime_speed_up_v1}
\begin{tabular}{c}                                                                                                                                                                                                                           
\centering
\includegraphics[width=0.45\textwidth]{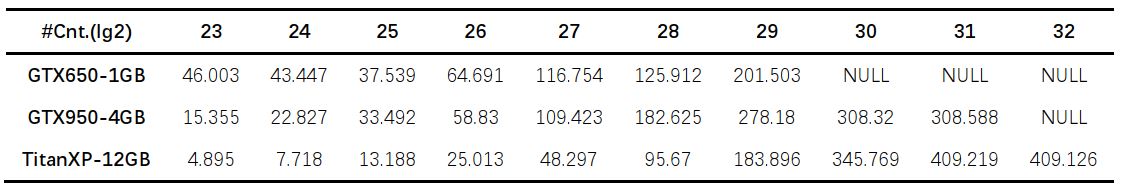}
\end{tabular}
\end{table}

When the number of counters is constant, the PT in each time slice is not necessarily the same on the same GPU. PT of each time slice will fluctuate within a certain range. Figure \ref{fig_exp_preserveTime_cmp_oneCounterN_AllGPU_v1} shows the PT of each time slice on different GPUs when the number of counters is $2^{29}$. As can be seen from Figure \ref{fig_exp_preserveTime_cmp_oneCounterN_AllGPU_v1}, PT changes greatly at the beginning, and then gradually stabilizes. The higher the computing power of GPU, the lower the PT variance of different time slices. On the same GPU, the variance of \textit{VATE}'s PT is lower than \textit{VDRE}'s PT. Therefore, the fluctuation of \textit{VATE}'s PT is small, and it runs smoothly. It can also be seen from the figure that in each time slice, the PT of \textit{VATE} is smaller than that of \textit{VDRE}. Regardless of the computing capacity of GPU, \textit{VATE} can make full use of the graphics memory of GPU to estimate cardinalities in real time with higher accuracy.
\begin{figure*}[!ht]
\centering
\includegraphics[width=0.97\textwidth]{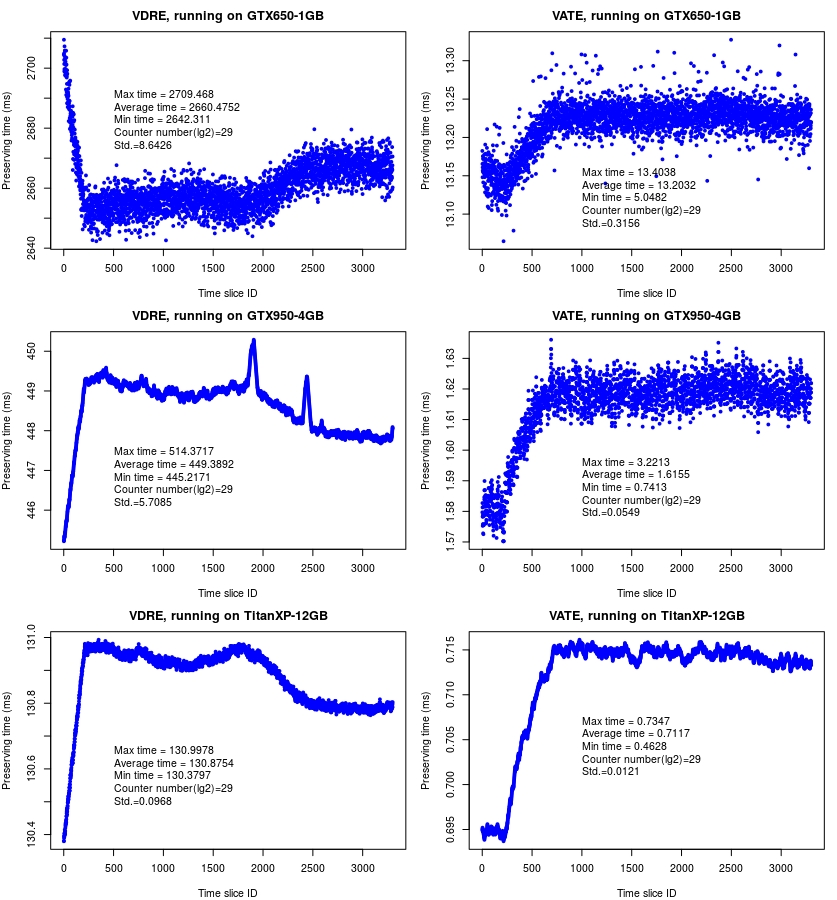}
\caption{Preserving time in different time slices}
\label{fig_exp_preserveTime_cmp_oneCounterN_AllGPU_v1}
\end{figure*}

\section{Conclusion}
The \textit{AT} proposed in this paper has the advantages of less memory consumption and low state-maintaining time. For sliding time window with $k$ time slices at most, each \textit{AT} only takes up $ceil(log_2(2*k+1))$ bits, and the time complexity of maintaining \textit{AT} state is only $O(1/k)$. Based on \textit{AT}, this paper designs a new multi-hosts cardinalities estimation algorithm \textit{VATE}. \textit{VATE} allocates a virtual \textit{AT} vector for each host to estimate the cardinality. \textit{VATE} improves memory utilisation by sharing $AT$s in an \textit{AT} pool. The higher the number of $AT$s in \textit{ATP}, the higher the accuracy of \textit{VATE}. Because \textit{AT} occupies less memory and requires smaller state-maintaining time, \textit{VATE} can use more counters than existing algorithms to estimate cardinalities of different hosts under sliding time windows in real time with higher accuracy. \textit{VATE} is also a parallel algorithm that can be deployed on GPU. With the parallel processing capacity of GPU, \textit{VATE} can estimate the cardinalities of hosts in a 40 Gb/s high-speed network with the length of time slice as small as 1 second. In the further work, we will deploy \textit{VATE} in CERNET network to provide critical host attribute for network security and management. 
 \section*{Acknowledgment}
The authors would like to thank anonymous reviewers. The research work leading to this article is supported by the National Natural Science Foundation of China under Grant No. 61602114

\iftoggle{ACM}{
\bibliographystyle{ACM-Reference-Format}
}
\iftoggle{IEEEcls}{
\bibliographystyle{IEEEtran}
}
\iftoggle{ElsJ}{
\bibliographystyle{elsarticle-num}
}

\bibliography{..//ref} 

\end{document}